\begin{document}
\newcommand{\abs}[1]{\left\vert#1\right\vert}
\newcommand{\set}[1]{\left\{#1\right\}}
\newcommand{\eps}{\varepsilon}
\newcommand{\To}{\rightarrow}
\newcommand{\inv}{^{-1}}
\newcommand{\ihat}{\hat{\imath}}
\newcommand{\var}{\mbox{Var}}
\newcommand{\sd}{\mbox{SD}}
\newcommand{\cov}{\mbox{Cov}}
\newcommand{\f}{\frac}
\newcommand{\fI}[1]{\frac{1}{#1}}
\newcommand{\what}[1]{\widehat{#1}}
\newcommand{\hhat}[1]{\what{\what{#1}}}
\newcommand{\wtilde}[1]{\widetilde{#1}}
\newcommand{\bdot}{\bm{\cdot}}
\newcommand{\Th}{\theta}
\newcommand{\qmq}[1]{\quad\mbox{#1}\quad}
\newcommand{\qm}[1]{\quad\mbox{#1}}
\newcommand{\mq}[1]{\mbox{#1}\quad}
\newcommand{\tr}{\mbox{tr}}
\newcommand{\logit}{\mbox{logit}}
\newcommand{\noi}{\noindent}
\newcommand{\bni}{\bigskip\noindent}
\newcommand{\bul}{$\bullet$ }
\newcommand{\bias}{\mbox{bias}}
\newcommand{\conv}{\mbox{conv}}
\newcommand{\spn}{\mbox{span}}
\newcommand{\colspace}{\mbox{colspace}}
\newcommand{\mA}{\mathcal{A}}
\newcommand{\mF}{\mathcal{F}}
\newcommand{\mH}{\mathcal{H}}
\newcommand{\mI}{\mathcal{I}}
\newcommand{\mJ}{\mathcal{J}}
\newcommand{\mN}{\mathcal{N}}
\newcommand{\mR}{\mathcal{R}}
\newcommand{\mT}{\mathcal{T}}
\newcommand{\mX}{\mathcal{X}}
\newcommand{\mS}{\mathcal{S}}
\newcommand{\bbR}{\mathbb{R}}
\newcommand{\fweI}{\mbox{$k_1$-FWER$_1$}}
\newcommand{\fweII}{\mbox{$k_2$-FWER$_2$}}
\newcommand{\fdp}{\mbox{FDP}}
\newcommand{\fnp}{\mbox{FNP}}
\newcommand{\gfdp}{\mbox{$\gamma_1$-FDP}}
\newcommand{\gfnp}{\mbox{$\gamma_2$-FNP}}
\newcommand{\vphi}{\varphi}
\newcommand{\oN}{\overline{N}}

\newtheorem{theorem}{Theorem}[section]
\newtheorem{corollary}{Corollary}[section]
\newtheorem{conjecture}{Conjecture}[section]
\newtheorem{proposition}{Proposition}[section]
\newtheorem{lemma}{Lemma}[section]
\newtheorem{definition}{Definition}[section]
\newtheorem{example}{Example}[section]
\newtheorem{remark}{Remark}[section]

\title{{\bf\Large Multiple Hypothesis  Tests  Controlling Generalized Error Rates for Sequential Data}}

\author{\textsc{Jay Bartroff\footnote{email: \textsf{bartroff@usc.edu}. This work was partially supported by National Science Foundation grant DMS-1310127 and National Institutes of Health grant R01 GM068968.}}\\
\small{Department of Mathematics, University of Southern California, Los Angeles, California, USA}}  
\footnotetext{Key words and phrases: false discovery proportion, familywise error, generalized error rate, high-dimensional statistics, multiple comparisons, multiple testing, sequential analysis, sequential hypothesis testing, stepdown procedure, stepup procedure, Wald approximations. } 

\date{}  
\maketitle

\abstract{The $\gamma$-FDP and $k$-FWER multiple testing error metrics, which are tail probabilities of the respective error statistics, have become popular recently as alternatives to the FDR and FWER. We propose general and flexible stepup and stepdown procedures for testing multiple hypotheses about sequential (or streaming) data that simultaneously control both the type I and II versions of $\gamma$-FDP, or $k$-FWER. The error control holds regardless of the dependence between data streams, which may be of arbitrary size and shape.  All that is needed is a test statistic for
each data stream that controls the conventional type I and II error probabilities, and no information or assumptions are required about the joint distribution of the statistics or data
streams. The procedures can be used with sequential, group sequential, truncated, or other
sampling schemes. We give recommendations for the procedures' implementation including closed-form expressions for the needed critical values in some commonly-encountered testing situations. The proposed sequential procedures are compared with each other and with comparable fixed sample size procedures in the context of strongly positively correlated Gaussian data streams. For this setting we conclude that both the stepup and stepdown sequential procedures provide substantial savings over the fixed sample procedures in terms of expected sample size, and the stepup procedure performs slightly but consistently better than the stepdown for $\gamma$-FDP control, with the relationship reversed for $k$-FWER control.}

\section{Introduction and Summary}\label{sec:intro}

Driven in part by modern applications involving high-dimensional models or the need for many comparisons in areas such as high-throughput gene and protein expression data, brain imaging, and astrophysics, there has been much interest and innovation during recent decades in statistical methodology involving multiple testing error rates which are less stringent than the classical \textit{familywise error rate (FWER)}, the probability of rejecting at least one true null hypothesis. \citet{Hommel88}  proposed the $k$-FWER, the probability of rejecting at least $k\ge 1$ true null hypotheses, and this was independently proposed later by \citet{Lehmann05b}.  \citet{Benjamini95} proposed the \textit{false discovery rate (FDR)}, the expectation of the \textit{false discovery proportion (FDP)}, the latter being the proportion of rejected null hypotheses that are true. As a generalization of the FDR,  \citet{Lehmann05b} proposed using the probability that the FDP exceeds a fixed value $\gamma\in[0,1)$, which has come to be known as the $\gamma$-FDP. Recently, Guo, He, and Sarkar~\citeyearpar{Guo14} proposed a further generalization of the $\gamma$-FDP.  Each of these authors also supplied procedures to control the respective generalized error rates under various dependence assumptions on the data, ranging from independence to positive regression dependency on subsets \citep{Benjamini01} to no assumptions at all. Many other authors have also provided innovative new procedures and theory surrounding these generalized error rates and we do not attempt to summarize this large and growing literature here, but instead refer the reader to \citet{Guo14} and the references therein.

All of the references mentioned above take as their starting point a set of valid $p$-values corresponding to fixed sample size tests for the list of null hypotheses of interest. However, in some areas of application the data in a multiple testing setup does not naturally occur in a fixed sample but rather arrives sequentially (or in groups) in time, referred to as ``streaming'' data in some applications. An obvious example is in biomedical clinical trials with multiple endpoints or arms \citep[e.g.,][Chapter~15]{Jennison00}, but  others areas with naturally sequential data abound including certain high-throughput sequencing technologies \citep{Salzman11,Jiang12}, multi-channel changepoint detection \citep{Tartakovsky03}, biosurveillance \citep{Mei10}, acceptance sampling with multiple criteria \citep{Baillie87}, financial data \citep{Lai08}, and some agricultural studies \citep{Clements14}. Only recently have general and flexible multiple testing procedures suited for the particular needs of sequential  data been proposed in literature. \citet{Bartroff10e} proposed a sequential version of Holm's \citeyearpar{Holm79} FWER-controlling procedure.  \citet{De12,De12b} proposed procedures controlling both the type~I and II FWER under the restriction that all data streams be sampled until accept/reject decisions can be reached for all null hypotheses simultaneously, and \citet{Bartroff14b} proposed a procedure lifting this restriction. Like Holm's procedure, each of these sequential procedures mentioned so far has guaranteed FWER control under arbitrary dependence of data streams. \citet{Bartroff14c} proposed an analogous procedure controlling FDR and its type~II analog, the false nondiscovery rate, on sequential data.

The purpose of this paper is to provide general and flexible procedures for controlling $k$-FWER and $\gamma$-FDP on sequential data. By ``general and flexible'' we mean procedures that can test $J\ge 2$ arbitrary null hypotheses $H^{(1)},\ldots,H^{(J)}$ about $J$ data streams
 \begin{align}
\mbox{Data stream $1$:}\quad&X_1^{(1)},X_2^{(1)},\ldots\nonumber\\
\mbox{Data stream $2$:}\quad&X_1^{(2)},X_2^{(2)},\ldots\label{streams}\\
\vdots&\nonumber\\
\mbox{Data stream $J$:}\quad&X_1^{(J)},X_2^{(J)},\ldots,\nonumber
\end{align} respectively, of arbitrary size, shape, and dependence. In particular, each data point~$X_n^{(j)}$ may itself be the vector of observations from the $n$th group, corresponding to group sequential sampling.  This setup is formalized below. 
In particular we define stepdown and stepup procedures that require only arbitrary sequential test statistics $\Lambda^{(j)}(n)=\Lambda^{(j)}(n)(X_1^{(j)},\ldots, X_n^{(j)})$ for each stream $j=1,\ldots,J$ that control the conventional type~I and II error probabilities for each individual null hypothesis~$H^{(j)}$, and combine them to give a sequential multiple testing procedure, i.e., a collection of $J$ sequential stopping and decision rules for each data stream, that  controls $k$-FWER or $\gamma$-FDP at a prescribed level under arbitrary dependence structure between  data streams. In this regard our procedures can be viewed as extensions to the sequential realm of the procedures of \citet{Lehmann05b} and \citet{Romano06b,Romano06} who accomplished this in the  fixed sample setup. Indeed,  our approach owes much to the work of these authors and, in particular, we utilize the same stepup and stepdown values as those papers.  \citet{Sarkar07,Sarkar08} and \citet{Guo14} have furthered the work of these authors by developing stepup and stepdown fixed sample size procedures which utilize the joint null distribution of the $p$-values and, in some cases, dominate previously proposed procedures while controlling generalized error rates. While we expect that these innovations by Sarkar and his coauthors can similarly be extended to the sequential domain, since our goal here is to propose procedures that don't require knowledge (or modeling) of joint distributions, we have not pursued those extensions here.

An additional aspect of our approach is that our procedures may be able to simultaneously control  both the type~I and II versions of the generalized error metrics at prescribed values, which is a possibility opened up by the sequential setting considered. As with single hypothesis testing, if the prescribed type~II error rate (or equivalently, power) is not well-motivated, then its strict control can be dispensed with or used as a surrogate for other operating characteristics of interest, such as average sample size. For this reason and others discussed in Section~\ref{sec:rej}, there we provide versions of the procedures that control the type~I generalized error rate but not necessarily the type~II version, and these procedures can be used with arbitrary acceptance rules for the null hypotheses.  

Regarding our sequential setup, we remark that in order to sequentially test $J\ge 2$ null hypotheses, one could simply apply $J$ chosen sequential stopping rules to the data streams, calculate the (appropriately adjusted) $p$-values upon stopping, and then apply a fixed-sample procedure to the $p$-values. However, this ``naive'' method will in general be inefficient compared to the procedures proposed herein since the stopping rules do not explicitly take the multiple testing error metric into account. Moreover, the naive method will not in general control both the type~I and II multiple testing error rates which, even when this feature is not a priority of the statistician as mentioned in the previous paragraph, means that the relationship between the naive method's stopping rule and its power is not well understood, unlike the proposed procedures. Nonetheless, our approach below could be applied by taking the test statistics~$\Lambda^{(j)}(n)$ to be sequential $p$-values, making it look more like the fixed sample size procedures. Instead we have chosen to use arbitrary sequential test statistics to maintain generality and to make the resulting procedure more user-friendly,  given that other types of test statistics like log-likelihood ratios (or simple functions thereof) are much more commonly used with sequential data than sequential $p$-values. This is perhaps due to the complexity and non-uniqueness of sequential $p$-values in all but the simplest cases; see \citet[][Chapters~8.4 and 9]{Jennison00}.

The remainder of this paper is organized as follows.  After introducing the notation and setup in Section~\ref{sec:setup}, in Section~\ref{sec:D} we define a ``generic'' sequential stepdown procedure that accepts arbitrary stepdown values~\eqref{UD.values}, special cases of which are given in Sections~\ref{sec:DP.procs} and \ref{Dk.procs} which control type~I and II $\gamma$-FDR and $k$-FWER, respectively. An analogous development of stepup procedures in given in Section~\ref{sec:U}. Section~\ref{sec:rej} gives versions of these procedures with only explicit rejection rules for use when the type~II error rate is not well motivated or there is a restriction on maximum sample size. In Section~\ref{sec:imp} we give recommendations for implementing the procedures by reviewing how to implement sequential single-hypothesis tests in some commonly-encountered situations, and we give closed-form expressions for the needed critical values in Theorem~\ref{thm:simple}. Section~\ref{sec:group.seq} discusses how to implement group sequential sampling. Section~\ref{sec:sim} contains the results of a numerical study comparing the proposed stepup, stepdown, and comparable fixed sample size procedures in a setting of strongly positively correlated Gaussian data streams. In Section~\ref{sec:conc} we summarize our recommendations. All proofs are delayed until Section~\ref{sec:proofs}.

\section{Setup}\label{sec:setup}
\subsection{Data Streams, Hypotheses, and Error Metrics}
 Assume that there are $J\ge 2$ data streams \eqref{streams}. In general we make no assumptions about the dimension  of the sequentially-observed data $X_n^{(j)}$, which may themselves be vectors of varying size, nor about the dependence structure of within-stream data $X_n^{(j)}, X_m^{(j)}$ or between-stream data $X_n^{(j)}, X_m^{(j')}$ ($j\ne j'$). In particular there can be arbitrary ``overlap'' between data streams, an extreme case being that all the data streams are the same, which is equivalent to testing multiple hypotheses about a single data source. For any positive integer $j$ let $[j]=\{1,\ldots,j\}$. For each data stream, indexed by $j\in[J]$, assume that there is a parameter vector $\theta^{(j)}\in\Theta^{(j)}$ determining that distribution of the stream $X_1^{(j)}, X_2^{(j)},\ldots$, and it is desired to test a null hypothesis $H^{(j)}$ versus the alternative hypothesis $G^{(j)}$, where $H^{(j)}$ and $G^{(j)}$ are disjoint subsets of the parameter space~$\Theta^{(j)}$ containing $\theta^{(j)}$. The null~$H^{(j)}$ is considered \textit{true} if $\theta^{(j)}\in H^{(j)}$, and \textit{false} if $\theta^{(j)}\in G^{(j)}$. The global parameter $\theta=(\theta^{(1)},\ldots,\theta^{(J)})$ is the concatenation of the individual parameters and is contained in the global parameter space $\Theta=\Theta^{(1)}\times\cdots\times \Theta^{(J)}$. Let
\begin{equation}\label{T}
\mathcal{T}(\theta) = \{ j\in[J]: \theta^{(j)} \in H^{(j)}\}
\end{equation}
denote the indices of the true null hypotheses when $\theta$ is the true global parameter,  and 
\begin{equation}\label{F}
\mF(\theta) = \{ j\in[J]: \theta^{(j)} \in G^{(j)}\}
\end{equation}
the indices of the false null hypotheses.

It may appear that the notation~\eqref{streams} for the data streams restricts us to fully-sequential sampling where the streamwise sample sizes may take any value $1,2,\ldots$ \textit{ad infinitum}. However, since the observations $X_n^{(j)}$ themselves may be of arbitrary size and shape, group sequential (and even variable-stage size) sampling fits into this framework. To wit, the $n$th ``observation''~$X_n^{(j)}$ in the $j$th stream may actually be the $n$th group $X_n^{(j)}=(X_{n,1}^{(j)},\ldots,X_{n,\ell}^{(j)})$ of size $\ell$. Moreover, the group size~$\ell$ may vary with $n$ and may even be data-dependent, e.g., determined by some type of adaptive sampling. Similarly, truncated sampling can be implemented for the $j$th stream by defining $X_n^{(j)}=\emptyset$ for all $n>\overline{N}^{(j)}$ for some stream-specific truncation point $\overline{N}^{(j)}$, or globally for all streams by replacing statements like ``for some $n$'' in what follows with ``for some $n\le \overline{N}$,'' for some global truncation point $\overline{N}$. 

The FDP is formally defined as
\begin{equation}\label{FDP.def}
\fdp(\theta)= \begin{cases}
 \frac{\mbox{the number of $H^{(j)}$ rejected, $j\in\mT(\theta)$}}{\mbox{the number of $H^{(j)}$ rejected}},&\mbox{if the denominator is positive,}\\
 0,&\mbox{otherwise.}
\end{cases}
\end{equation} For example, as mentioned above, Benjamini and Hochberg's \citeyearpar{Benjamini95} FDR is the expectation $E_\theta(\fdp(\theta))$ of the FDP.  Since we will consider procedures that simultaneously control both the type~I and type~II versions of the generalized error rates, we also define the type~II analog of FDP, which we call the \textit{false nondiscovery proportion (FNP)}, 
\begin{equation*}
\fnp(\theta)= \begin{cases}
 \frac{\mbox{the number of $H^{(j)}$ accepted, $j\in\mF(\theta)$}}{\mbox{the number of $H^{(j)}$ accepted}},&\mbox{if the denominator is positive,}\\
 0,&\mbox{otherwise.}
\end{cases}
\end{equation*}
With FDP and FNP nailed down, for $\gamma_1,\gamma_2\in[0,1)$  we define
$$\gfdp(\theta)=P_\theta(\fdp(\theta)>\gamma_1)\qmq{and} 
\gfnp(\theta)=P_\theta(\fnp(\theta)>\gamma_2).
$$

Similarly, for $k$-FWER we will distinguish the type~I and II versions by, for $k_1,k_2\in[J]$, defining
\begin{align*}
\fweI(\theta) &= P_\theta(\mbox{at least $k_1$ null hypotheses $H^{(j)}$ rejected, $j\in\mathcal{T}(\theta)$}),\\
\fweII(\theta) &= P_\theta(\mbox{at least $k_2$ null hypotheses $H^{(j)}$ accepted, $j\in\mathcal{F}(\theta)$}).
\end{align*} We will omit the argument $\theta$ from these quantities in what follows when it causes no confusion.

\subsection{Test Statistics and Critical Values}\label{sec:stats}
The building blocks of the sequential procedures defined below are $J$ individual sequential test statistics $\{\Lambda^{(j)}(n)\}_{j\in[J],\; n\ge 1}$, where $\Lambda^{(j)}(n)$ is the statistic for testing $H^{(j)}$ vs.\ $G^{(j)}$ based on the data $X_1^{(j)},X_2^{(j)},\ldots,X_n^{(j)}$ available  from the $j$th stream at time $n$.  For example, $\Lambda^{(j)}(n)$ may be a sequential log likelihood ratio statistic for testing $H^{(j)}$ vs.\ $G^{(j)}$. The stepup and stepdown procedures proposed below will be defined in terms of given constants
\begin{equation}\label{UD.values}
0\le\alpha_1\le\ldots\le\alpha_J\le 1\qmq{and}0\le\beta_1\le\ldots\le\beta_J\le 1,
\end{equation} which we will refer to as \textit{step values}, the $\alpha_j$ corresponding to type~I error control and the $\beta_j$ to type~II. These values will be used in a similar way as in fixed sample size stepdown and stepup procedures, which we review now for comparison with what follows. Based on $p$-values $p^{(j_1)}\le\ldots\le p^{(j_J)}$ with $p^{(j)}$ corresponding to $H^{(j)}$, the stepdown procedure based on constants $\alpha_j$ satisfying \eqref{UD.values} rejects $H^{(j_1)},\ldots, H^{(j_d)}$ where $d=\max\{i\in[J]: p^{(j_{i'})}\le \alpha_{i'}\;\mbox{for all}\; i'\le i\}$ (accepting all nulls if the maximum doesn't exist), whereas the stepup procedure rejects $H^{(j_1)},\ldots, H^{(j_u)}$ where $u=\max\{i\in[J]: p^{(j_i)}\le \alpha_i \}$ (accepting all nulls if the maximum doesn't exist). Note that $d\le u$ so that the stepup procedure rejects at least as many null hypotheses as the corresponding stepdown procedure using the same step values.

Given step values $\{\alpha_j, \beta_j\}_{j\in[J]}$, for each test statistic $\Lambda^{(j)}(n)$ we assume the existence of critical values $\{A_w^{(j)}, B_w^{(j)}\}_{w\in[J]}$ such that
\begin{align}
P_{\theta^{(j)}}(\Lambda^{(j)}(n)\ge B_w^{(j)}\;\mbox{some $n$,}\; \Lambda^{(j)}(n')>A_1^{(j)}\;\mbox{all $n'<n$})&\le \alpha_w\qmq{for all}\theta^{(j)}\in H^{(j)}\label{typeI}\\
 P_{\theta^{(j)}}(\Lambda^{(j)}(n)\le A_w^{(j)}\;\mbox{some $n$,}\; \Lambda^{(j)}(n')<B_1^{(j)}\;\mbox{all $n'<n$})&\le \beta_w\qmq{for all}\theta^{(j)}\in G^{(j)}\label{typeII}
\end{align} for all $w\in[J]$. The critical values $A_1^{(j)}, B_1^{(j)}$ are simply the critical values for the sequential test that samples until $\Lambda^{(j)}(n)\not\in (A_1^{(j)}, B_1^{(j)})$, and this test has type~I and II error probabilities bounded above by $\alpha_1$ and $\beta_1$, respectively.  The values $B_w^{(j)}$, $w\in[J]$, are then such that the similar sequential test with critical values $A_1^{(j)}$ and $B_w^{(j)}$ has type~I error probability $\alpha_w$, which is just a restatement of \eqref{typeI}, with an analogous statement holding for critical values $A_w^{(j)}$ and $B_1^{(j)}$, type~II error probability $\beta_w$, and \eqref{typeII}. The reason that critical values $A_1^{(j)}$ and $B_w^{(j)}$ are considered in \eqref{typeI} for type~I error probability control and not, say, $A_w^{(j)}$ and $B_w^{(j)}$ is that the procedures defined below will sample during the $i$th stage using critical values $A_w^{(j)}$ and $B_{w'}^{(j)}$ for some fixed values $w,w'\in[J]$ determined by the data in the previous stages $1,\ldots,i-1$. The probability that, during the $i$th stage, $\Lambda^{(j)}(n)\ge B_{w'}^{(j)}$  before $\Lambda^{(j)}(n)\le A_w^{(j)}$ will then be bounded above by the corresponding statement with $A_w^{(j)}$ replaced by $A_1^{(j)}$, using the fact that $A_1^{(j)}\le A_w^{(j)}$ by \eqref{AB.mono}, and thus this probability related to \eqref{typeI} after bounding $w'$.  Analogous statements apply regarding bounding the type~II error probability.

In all commonly-encountered testing situations there are standard sequential statistics whose critical values can be chosen that satisfy these error bounds, for any given $\{\alpha_j, \beta_j\}_{j\in[J]}$ \citep[][give examples]{Bartroff14b}. Without loss of generality we assume that, for each $j\in[J]$, 
\begin{gather}
A_1^{(j)}\le A_2^{(j)}\le\ldots\le A_J^{(j)}\le B_J^{(j)}\le B_{J-1}^{(j)}\le\ldots\le B_1^{(j)}, \label{AB.mono}\\
\mbox{$A_w^{(j)}=A_{w+1}^{(j)}$ if and only if  $\beta_w=\beta_{w+1}$,}\label{A.mono}\\
\mbox{$B_w^{(j)}=B_{w+1}^{(j)}$ if and only if  $\alpha_w=\alpha_{w+1}$.}\label{B.mono}
\end{gather}
A simplistic example of how critical values \eqref{AB.mono} are used in our sequential multiple testing procedure will be given in the last two paragraphs of Section~\ref{sec:gen.D}.

The sequential multiple testing procedures proposed below will involve ranking the test statistics associated  with different data streams, which may be on completely different scales in general, so for each stream $j$ we introduce a \textit{standardizing function} $\vphi^{(j)}(\cdot)$ which will be applied to the statistic $\Lambda^{(j)}(n)$ before ranking. The standardizing functions  $\vphi^{(j)}$ can be any increasing functions such that $\vphi^{(j)}(A_w^{(j)})$ and $\vphi^{(j)}(B_w^{(j)})$ do not depend on $j$, and we let
\begin{equation}\label{ab.def}
a_w=\vphi^{(j)}(A_w^{(j)})\qmq{and}b_w=\vphi^{(j)}(B_w^{(j)}),\quad j,w\in[J],
\end{equation} denote these common values.  Given critical values $\{A_w^{(j)},B_w^{(j)}\}_{j,w\in[J]}$ satisfying \eqref{typeI}-\eqref{typeII}, one may choose arbitrary values $\{a_w,b_w\}_{w\in[J]}$ satisfying the same monotonicity conditions as the $\{A_w^{(j)},B_w^{(j)}\}$ according to \eqref{A.mono}-\eqref{B.mono} and then define the standardizing functions $\vphi^{(j)}(\cdot)$ to be increasing, piecewise linear functions satisfying \eqref{ab.def}. For example, if all the $\alpha_w$ are distinct and the $\beta_w$ are distinct then a simple choice for the $\{a_j,b_j\}$ are the integers
$$a_1=-J,\quad a_2=-J+1,\quad \ldots,\quad a_J=-1, \quad b_J=1, \quad b_{J-1}=2,\quad \ldots,\quad b_1=J.$$
In any case, the assumptions on the critical values and standardizing functions imply that the $a_w$ must be nondecreasing and the $b_w$ nonincreasing.
Finally, we denote $\wtilde{\Lambda}^{(j)}(n)= \varphi^{(j)}(\Lambda^{(j)}(n))$ and then \eqref{typeI}-\eqref{typeII} can be written
\begin{align}
P_{\theta^{(j)}}(\wtilde{\Lambda}^{(j)}(n)\ge b_w\;\mbox{some $n$,}\; \wtilde{\Lambda}^{(j)}(n') > a_1\;\mbox{all $n'<n$})&\le \alpha_w\qmq{for all}\theta^{(j)}\in H^{(j)}\label{typeI.stand}\\
 P_{\theta^{(j)}}(\wtilde{\Lambda}^{(j)}(n)\le a_w\;\mbox{some $n$,}\; \wtilde{\Lambda}^{(j)}(n')< b_1\;\mbox{all $n'<n$})&\le \beta_w \qmq{for all}\theta^{(j)}\in G^{(j)},\label{typeII.stand}
\end{align} for all $j,w\in[J]$.

\section{Procedures Controlling Type~I and II Generalized Error Rates}\label{sec:I&II}

\subsection{Stepdown Procedures}\label{sec:D}

\subsubsection{The Generic Sequential Stepdown Procedure}\label{sec:gen.D}

Here we define a generic sequential stepdown procedure, special cases of which will be used to define the type~I and II $k$-FWER and $\gamma$-FDP controlling sequential procedures below.  We assume that  step values $\{\alpha_j, \beta_j\}_{j\in[J]}$ satisfying \eqref{UD.values} are given and that the test statistics and critical values satisfy the assumptions in Section~\ref{sec:stats} with respect to these values.

We will describe the procedure in terms of stages of sampling, between which reject/accept decisions are made. Let $\mJ_i\subseteq[J]$ ($i=1,2,\ldots$) denote the index set of the \emph{active} data streams (i.e., those whose corresponding null hypothesis $H^{(j)}$ has been neither accepted nor rejected yet) at the beginning of the $i$th stage of sampling, and $n_i$ will denote the cumulative sample size of any active test statistic up to and including the $i$th stage. The total number of null hypotheses that have been rejected (resp.\ accepted) at the beginning of the $i$th stage will be denoted by $r_i$ (resp.\ $c_i$). Accordingly, set $\mJ_1=[J]$, $n_0=0$, $r_1=c_1 =0$. Let $|\cdot|$ denote set cardinality. Then the $i$th stage of sampling ($i=1,2,\ldots$) of the \textbf{Generic Sequential Stepdown Procedure} with step values $\{\alpha_j, \beta_j\}_{j\in[J]}$ proceeds as follows.

\begin{enumerate}
\item\label{sample-step} Sample the active streams $\{X_n^{(j)}\}_{j\in\mJ_i, \; n>n_{i-1}}$ until $n$ equals
\begin{equation}\label{cont-samp}n_i=\inf\left\{n>n_{i-1}: \wtilde{\Lambda}^{(j)}(n)\not\in (a_{c_i+1}, b_{r_i+1}) \qmq{for some} j\in\mJ_i \right\}.\end{equation}
\item\label{step:ord} Order the active test statistics
\begin{equation*}
\wtilde{\Lambda}^{(j(n_i,1))}(n_i)\le \wtilde{\Lambda}^{(j(n_i,2))}(n_i)\le \ldots\le \wtilde{\Lambda}^{(j(n_i,|\mJ_i|))}(n_i),\end{equation*} where $j(n_i,\ell)$ denotes the index of the $\ell$th ordered active statistic at the end of stage~$i$.
\item  
\begin{enumerate}
\item\label{rej-step} If the upper boundary in \eqref{cont-samp} has been crossed, that is, if $\wtilde{\Lambda}^{(j)}(n_i)\ge b_{r_i+1}$ for some $j\in\mJ_i$, then reject the $m_i\ge 1$ null hypotheses \begin{equation}\label{Hsrej}
H^{(j(n_i,|\mJ_i|))}, H^{(j(n_i,|\mJ_i|-1))}, \ldots, H^{(j(n_i,|\mJ_i|-m_i+1))}, 
\end{equation}
where 
\begin{equation*}
m_i=\max\left\{m\in [|\mJ_i|]: \wtilde{\Lambda}^{(j(n_i,\ell))}(n_i)\ge b_{r_i+|\mJ_i|-\ell+1}\qmq{for all}\ell=|\mJ_i|-m+1,\ldots,|\mJ_i| \right\},
\end{equation*} 
and set $r_{i+1}=r_i+m_i$. Otherwise set $r_{i+1}=r_i$.

\item\label{acc-step} If the lower boundary in \eqref{cont-samp} was crossed,  that is, if $\wtilde{\Lambda}^{(j)}(n_i)\le a_{c_i+1}$ for some $j\in\mJ_i$, then accept the $m_i'\ge 1$ null hypotheses $$H^{(j(n_i,1))}, H^{(j(n_i,2))}, \ldots, H^{(j(n_i,m_i'))},$$ where 
\begin{equation*}
m_i'=\max\left\{m\in[|\mJ_i|] : \wtilde{\Lambda}^{(j(n_i,\ell))}(n_i)\le a_{c_i+\ell} \qmq{for all}\ell=1,\ldots,m\right\},
\end{equation*} and set $c_{i+1}=c_i+m_i'$. Otherwise set $c_{i+1}=c_i$.
\end{enumerate}

\item\label{stop-step} Stop if there are no remaining active hypotheses, i.e., if $r_{i+1}+c_{i+1}=J$.  Otherwise, let $\mJ_{i+1}$ be the indices of the remaining active hypotheses and continue on to stage~$i+1$.
\end{enumerate} 

In other words, the procedure samples all active data streams until at least one of the active null hypotheses can be accepted or rejected, indicated by the stopping rule~\eqref{cont-samp}. At that point, stepdown rejection/acceptance rules are used in steps \ref{rej-step}/\ref{acc-step} to reject/accept some active null hypotheses. After updating the list of active hypotheses, the process is repeated until no active hypotheses remain.

\begin{remark}\label{rem:D.def} We make a few remarks about the generic procedure's definition.
\begin{enumerate}[(A)]
\item\label{D.no.confl} The relationships \eqref{AB.mono}-\eqref{ab.def} insure that there will never be a conflict between the rejections in Step~\eqref{rej-step} and the acceptances in Step~\eqref{acc-step}.  

\item Ties in the order statistics $\wtilde{\Lambda}^{(j)}_{n}$ in Step~\ref{step:ord} can be broken arbitrarily (at random, say) without affecting any of the error control properties proved below in Theorems~\ref{thm:DP} and \ref{thm:Dk}.

\item \label{rem:no.stand} If common critical values can be used for all data streams, that is, if $A_w^{(j)}=A_w^{(j')}=A_w$ and $B_w^{(j)}=B_w^{(j')}=B_w$ for all $j, j', w\in[J]$, then the standardizing functions can be dispensed with, i.e., we can take $\varphi^{(j)}(x)=x$ giving $a_j=A_j$ and $b_j=B_j$ for all $j\in[J]$.

\item The critical values $A_w^{(j)}, B_w^{(j)}$ may also depend on the current sample size $n$ of the test statistic $\Lambda^{(j)}(n)$ being compared with them, with only notational changes in the definition of the generic procedure and the properties proved below; to avoid overly cumbersome notation we have omitted this from the presentation. Standard group sequential stopping boundaries -- such as Pocock, O'Brien-Fleming, power family, and any others \citep[see][Chapters~2 and 4]{Jennison00} -- can be utilized for the individual test statistics in this way.

\item\label{D.gen.well.def} Note that the stopping time $n_i$ of the $i$th stage, given by \eqref{cont-samp}, is determined by the numbers~$c_i$ and $r_i$ of null hypotheses that have been rejected and accepted, respectively, during prior stages $1,\ldots,i-1$. Therefore this stopping rule is completely determined before the start of the $i$th stage and, in particular, unambiguously defined.

\end{enumerate}
\end{remark}

\textbf{Example.}\label{new.ex.sd} Before getting to specific examples of the generic stepdown procedure, we give a simplistic example to show the mechanics of the procedure. The following is a summary of an example appearing in \citet[][page~104]{Bartroff14b}, and details of the test statistics and critical values are given there and omitted here. All that is important for our purposes here is that there are $J=3$ chosen data streams, null/alternative hypothesis pairs $(H^{(j)}, G^{(j)})$, and sequential test statistics $\Lambda^{(j)}(n)$ with common critical values $A_w^{(j)}=A_w^{(j')}=A_w$ and $B_w^{(j)}=B_w^{(j')}=B_w$ for all $j, j', w\in\{1,2,3\}$, which are  given in the header of Table~\ref{tab:Bern-paths}. In particular, per Remark~\ref{rem:no.stand} above we take $a_j=A_j$, $b_j=B_j$, and $\wtilde{\Lambda}^{(j)}(n)=\Lambda^{(j)}(n)$ in the definition of the procedure. Table~\ref{tab:Bern-paths} contains three simulated sample paths and the critical values are given in the table's header. Let us focus on how the critical values determine the procedure's decisions to stop or continue sampling.  The values of the stopped test statistics are given in bold in the table.   
 
 On sample path~1, sampling proceeds until time $n_1=7$ when $H^{(1)}$ and $H^{(2)}$ are rejected because this is the first time any of the 3 test statistics exceed $B_1$ or fall below $A_1$. In particular, $H^{(1)}$ is rejected because $\Lambda^{(1)}(7)=2.03\ge B_1=1.93$ and $H^{(2)}$ is also rejected at this time because $\Lambda^{(2)}(7)=2.03\ge B_2=1.53$ and one null hypothesis (i.e., $H^{(1)}$) has already been rejected; the fact that $\Lambda^{(2)}(7)$ also exceeds $B_1$ was not necessary for rejecting $H^{(2)}$. Next, sampling of stream 3 is continued until time $n_2=10$ when $H^{(3)}$ is accepted because its test statistic falls below $A_1=-2.43$. Similarly, on sample path~2, after rejecting $H^{(1)}$ at time $n_1=7$, $H^{(2)}$ is then rejected at time $n_2=8$ because $\Lambda^{(2)}(8)$ exceeds $B_2=1.53$ and one null hypothesis (i.e., $H^{(1)}$) has already been rejected. $H^{(3)}$ is also accepted at time $n_2=8$ for the same reason as above. On sample path~3, all three null hypotheses are rejected at time $n_1=7$ because $\Lambda^{(1)}(7)=2.03\ge B_1$, $\Lambda^{(2)}(7)=2.03\ge B_2$ and one null hypothesis (i.e., $H^{(1)}$) has already been rejected, and $\Lambda^{(3)}(7)=1.22\ge B_3$ and two null hypotheses (i.e., $H^{(1)}$ and $H^{(2)}$) have already been rejected.

\begin{table}[h]
\caption{Three sample paths of a stepdown procedure for $J=3$ hypotheses using critical values $A_1=-2.34$, $A_2=-1.94$, $A_3=-1.27$, $B_1 = 1.93$, $B_2=1.53$, $B_3=.86$. The values of the stopped sequential statistics are in bold. From \citet[][page~104]{Bartroff14b}.}
\begin{center}
\begin{tabular}{cc|cccccccccc}
Data\\
Stream&&$n=1$&2&3&4&5&6&7&8&9&10\\\hline\hline
\multicolumn{12}{c}{\textit{Sample Path 1}}\\
\multirow{2}{*}{1}&$X_n^{(1)}$&0 &1 &1 &1 &1 &1 &1 &\\
&$\Lambda^{(1)}(n)$&-.41 &.00 &.41 &.81 &1.22 &1.62 &\textbf{2.03} &\\
\multirow{2}{*}{2}&$X_n^{(2)}$&1 &0 &1 &1 &1 &1 &1 &\\
&$\Lambda^{(2)}(n)$&.41 &.00 &.41 &.81 &1.22 &1.62 &\textbf{2.03} &\\
\multirow{2}{*}{3}&$X_n^{(3)}$&0 &1 &0 &0 &1 &0 &0 &0 &0 &0\\
&$\Lambda^{(3)}(n)$&-.41 &.00 &-.41 &-.81 &-.41 &-.81 &-1.22 &-1.62 &-2.03 &\textbf{-2.43}\\\hline
\multicolumn{12}{c}{\textit{Sample Path 2}}\\
\multirow{2}{*}{1}&&0 &1 &1 &1 &1 &1 &1 &\\
&&-.41 &.00 &.41 &.81 &1.22 &1.62 &\textbf{2.03} &\\
\multirow{2}{*}{2}&&1 &0 &0 &1 &1 &1 &1 &1 &\\
&&.41 &.00 &-.41 &.00 &.41 &.81 &1.22 &\textbf{1.62} &\\
\multirow{2}{*}{3}&&0 &1 &0 &0 &0 &0 &0 &0 &\\
&&-.41 &.00 &-.41 &-.81 &-1.22 &-1.62 &-2.03 &\textbf{-2.43} &
\\\hline
\multicolumn{12}{c}{\textit{Sample Path 3}}\\
\multirow{2}{*}{1}&&1 &0 &1 &1 &1 &1 &1 &\\
&&.41 &.00 &.41 &.81 &1.22 &1.62 &\textbf{2.03} &\\
\multirow{2}{*}{2}&&1 &1 &1 &0 &1 &1 &1 &\\
&&.41 &.81 &1.22 &.81 &1.22 &1.62 &\textbf{2.03} &\\
\multirow{2}{*}{3}&&0 &1 &0 &1 &1 &1 &1 &\\
&&-.41 &.00 &-.41 &.00 &.41 &.81 &\textbf{1.22} &\\\hline
\end{tabular}
\end{center}
\label{tab:Bern-paths}
\end{table}%

\subsubsection{A Stepdown Procedure Controlling $\gfdp$ and $\gfnp$}\label{sec:DP.procs}
The following step values\footnote{See Remark~\ref{rem:val.diff}.} were proposed by  \citet{Romano06b}. For $v\in[J]$ and $\gamma\in[0,1)$ define
\begin{align}
\overline{j}(t,v,\gamma)&=\min\{J,J+t-v,\lceil t/\gamma\rceil-1\}\qmq{for}t\in[\lfloor \gamma J\rfloor +1],\label{DP.jstar}\\
\overline{t}(v,\gamma)&=\min\left\{\lfloor \gamma J\rfloor+1, v, \left\lfloor \frac{\gamma (J-v)}{1-\gamma}\right\rfloor+1 \right\},\label{over.t}
\end{align}
omitting the third term in  the minimum in \eqref{DP.jstar} if $\gamma=0$.
Given a nondecreasing sequence $0\le \delta_1\le\ldots\le \delta_J\le 1$, for $v\in[J]$ and $\gamma\in[0,1)$ define
\begin{align*}
\eps(t,v,\gamma,\{\delta_j\})&=\delta_{\overline{j}(t,v,\gamma)}\qmq{for}t\in[\lfloor \gamma J\rfloor +1],\\
S_1(v,\gamma,\{\delta_j\})&= v\sum_{t=1}^{\overline{t}(v,\gamma)} \frac{\eps_t-\eps_{t-1}}{t}\qmq{where}\eps_t=\eps(t,v,\gamma,\{\delta_j\})\qmq{and}\eps_0=0,\\
D_1(\gamma,\{\delta_j\})&=\max_{0\le v\le J} S_1(v,\gamma,\{\delta_j\}).
\end{align*}
These quantities also depend on the total number~$J$ of null hypotheses but we have suppressed this in the notation since $J$ is fixed throughout. 

\begin{theorem}\label{thm:DP} Fix $\alpha, \beta\in (0,1)$ and $\gamma_1,\gamma_2\in [0,1)$.  Given any sequences of constants $0\le \delta_1\le\ldots\le \delta_J\le 1$ and $0\le \eta_1\le\ldots\le \eta_J\le 1$, define 
\begin{equation}\label{DP.gen.cons}
\alpha_j=\frac{\alpha\delta_j}{D_1(\gamma_1,\{\delta_{j'}\})},\quad \beta_j=\frac{\beta\eta_j}{D_1(\gamma_2,\{\eta_{j'}\})},\quad j\in[J]. 
\end{equation} If the test statistics and critical values satisfy the assumptions in Section~\ref{sec:stats} for these $\{\alpha_j,\beta_j\}_{j\in[J]}$, then the sequential stepdown procedure with step values \eqref{DP.gen.cons} satisfies 
\begin{equation*}
\gfdp(\theta)\le\alpha\qmq{and} \gfnp(\theta)\le \beta\qmq{for all}\theta\in\Theta 
\end{equation*}
regardless of the dependence between data streams.
\end{theorem}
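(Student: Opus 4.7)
The proof plan is to mirror the Romano-Shaikh (2006) argument for fixed-sample stepdown procedures, after reducing the sequential machinery to a family of super-uniform pseudo-$p$-values constructed from the boundary-crossing bounds \eqref{typeI.stand}-\eqref{typeII.stand}. For each true null $j \in \mT(\theta)$ define
$$\hat p^{(j)} = \min\left\{\alpha_w : w \in [J],\; \wtilde{\Lambda}^{(j)}(n)\ge b_w \text{ for some } n \text{ with } \wtilde{\Lambda}^{(j)}(n')>a_1 \text{ for all } n'<n\right\}$$
(taking the minimum to be $+\infty$ when the set is empty), and define $\hat q^{(j)}$ analogously for $j\in\mF(\theta)$ using the lower crossings of $a_w$. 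Condition \eqref{typeI.stand} yields $P_\theta(\hat p^{(j)}\le \alpha_w)\le\alpha_w$ for each $j\in\mT(\theta)$ and $w\in[J]$; symmetrically, \eqref{typeII.stand} gives $P_\theta(\hat q^{(j)}\le \beta_w)\le\beta_w$ for $j\in\mF(\theta)$.

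The core structural claim is that if the sequential stepdown procedure rejects a true null $H^{(j)}$ and that rejection is the $k$-th rejection overall (ordered temporally across stages and, within a stage, by decreasing statistic value), then $\hat p^{(j)}\le\alpha_k$. Reading off Step~\ref{rej-step}, if $H^{(j)}$ is rejected at stage~$i$ in active position $\ell$ then $\wtilde{\Lambda}^{(j)}(n_i)\ge b_{r_i + |\mJ_i| - \ell + 1}$, and bookkeeping confirms that $k = r_i + |\mJ_i| - \ell + 1$. Because $H^{(j)}$ is still active at the start of stage~$i$, the monotonicity $a_1\le a_{c_{i'}+1}$ from \eqref{AB.mono} combined with the stage-$i'$ stopping rule \eqref{cont-samp} precludes any crossing $\wtilde{\Lambda}^{(j)}(n')\le a_1$ at $n'<n_i$: such a crossing would have stopped stage~$i'$ early and triggered Step~\ref{acc-step} to accept $H^{(j)}$, contradicting its continued activity. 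Thus the defining event for $\hat p^{(j)}\le\alpha_k$ is realized with the witness $n=n_i$. Granted this claim, the event $\{\fdp(\theta)>\gamma_1\}$ is contained in an explicit union over $j\in\mT(\theta)$ and ranks $k$ of events $\{\hat p^{(j)}\le\alpha_k\}$, and the Romano-Shaikh combinatorial calculation, conditioning on $|\mT(\theta)|=v$ and using \eqref{DP.jstar}-\eqref{over.t}, bounds this union's probability by $\alpha\cdot D_1(\gamma_1,\{\delta_{j'}\})^{-1}\cdot\max_{v}S_1(v,\gamma_1,\{\delta_{j'}\}) = \alpha$. The $\gfnp$ bound is entirely symmetric, with $\hat q^{(j)}$, Step~\ref{acc-step}, and \eqref{typeII.stand} replacing their type~I counterparts.

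The main technical obstacle is the structural claim, particularly the ``no early $a_1$-crossing'' part. In a multi-stage scheme the stage-$i'$ sampling is only monitored against the current thresholds $(a_{c_{i'}+1},b_{r_{i'}+1})$, so one must carefully argue --- by induction on the stage index~$i'$, tracking active indices and the current boundary pair --- that any excursion of an active true null's statistic below $a_1$ would necessarily have produced its acceptance at stage~$i'$ via Step~\ref{acc-step}. This uses \eqref{AB.mono}-\eqref{B.mono} together with the observation that an active statistic falling to or below $a_1$ sits at the lowest active position at that time, so it would be accepted in position~$1$. Once this lemma is in hand, the fixed-sample Romano-Shaikh bound applies essentially verbatim, and the normalizing choice $D_1(\gamma_1,\{\delta_{j'}\})$ in \eqref{DP.gen.cons} is exactly what makes the resulting bound equal to $\alpha$.
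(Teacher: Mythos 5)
Your proposal is correct and follows essentially the same route as the paper's proof: your pseudo-$p$-values are a repackaging of the paper's boundary-crossing events $W(j,b_w)$ in \eqref{T>bM} (indeed $\{\hat p^{(j)}\le \alpha_w\}=W(j,b_w)$ by \eqref{AB.mono}--\eqref{B.mono}), your structural claim --- including the ``no early $a_1$-crossing'' induction --- is exactly what underlies the paper's containment $F\subseteq V(s^*,b_{r^*})$ together with the bounds \eqref{ar<eaS} and \eqref{s^*<tup}, and the combinatorial calculation you defer to Romano--Shaikh is carried out in the paper as Lemma~\ref{lem:DPV}. The only imprecision is the phrase ``sits at the lowest active position'': the correct observation is that every lower-ranked active statistic is then also at or below $a_1\le a_{c_i+\ell}$, so Step~\ref{acc-step} accepts all of them including the one in question; this does not affect the argument.
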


\begin{remark}\label{rem:dn.DP} A special case of the theorem that will likely be useful in practice is given by 
\begin{equation}\label{dn.Holm.DP}
\delta_j=\frac{\lfloor \gamma_1 j\rfloor+1}{J+\lfloor \gamma_1 j\rfloor+1-j},\quad \eta_j=\frac{\lfloor \gamma_2 j\rfloor+1}{J+\lfloor \gamma_2 j\rfloor+1-j},\quad j\in[J].
\end{equation}
 Of course there are other possibilities, such as $\delta_j=\eta_j=j/J$, which give step values proportional to the ones used in the original FDR-controlling procedure of \citet{Benjamini95}, although \citet[][p.~44]{Romano06b} found these to be smaller (and thus less desirable) than the step values~\eqref{DP.gen.cons} given by \eqref{dn.Holm.DP}, for the most part.
 \end{remark}

\begin{remark}\label{rem:val.diff} The third term in \eqref{over.t} is a slight improvement over the corresponding third term in \citet[][Equation~(3.11)]{Romano06b}, and our proof holds in their fixed sample size setting, giving a slightly improved upper bound for the number of true hypotheses.
\end{remark}

\subsubsection{A Stepdown Procedure Controlling the $\fweI$ and $\fweII$}\label{Dk.procs}
The stepdown procedure in the following theorem utilizes step values proposed by \citet{Lehmann05b}. 

\begin{theorem}\label{thm:Dk} Fix $\alpha, \beta\in (0,1)$, $k_1,k_2\in[J]$, and define
\begin{equation}\label{cons.Dk}
\alpha_j=\frac{k_1\alpha}{J-(j-k_1)^+},\quad \beta_j=\frac{k_2\beta}{J-(j-k_2)^+},\quad j\in[J],
\end{equation} where $x^+=\max\{x,0\}$. If the test statistics and critical values satisfy the assumptions in Section~\ref{sec:stats} for these $\{\alpha_j,\beta_j\}_{j\in[J]}$, then the sequential stepdown procedure with step values \eqref{cons.Dk} satisfies 
\begin{equation}\label{FWE<.Dk}
\fweI(\theta) \le\alpha\qmq{and}\fweII(\theta) \le\beta\qmq{for all}\theta\in\Theta 
\end{equation}
regardless of the dependence between data streams.

\end{theorem}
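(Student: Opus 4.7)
My plan is to adapt the Lehmann-Romano fixed-sample stepdown argument to the sequential setting, using the per-stream error bound \eqref{typeI.stand} in place of $p$-value tail bounds. I will prove the $\fweI$ bound; the $\fweII$ bound follows by the symmetric argument using the stepdown acceptance rule in Step~\ref{acc-step} and \eqref{typeII.stand}. The key intermediate fact will be that whenever a true null is rejected at overall rank $k$, the event in \eqref{typeI.stand} with $w=k$ occurs for that stream.

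The first step is to show precisely this. Suppose $H^{(j)}$ is rejected and is the $k$-th rejection overall. By the rule in Step~\ref{rej-step}, if $j$ is the $\ell$-th from the top among the nulls rejected at its rejection stage $\tau_j$, then $k=r_{\tau_j}+\ell$ and $\wtilde{\Lambda}^{(j)}(n_{\tau_j})\ge b_{r_{\tau_j}+\ell}=b_k$. The more delicate point is that $\wtilde{\Lambda}^{(j)}(n')>a_1$ for all $n'<n_{\tau_j}$; I will prove this by contradiction. If $\wtilde{\Lambda}^{(j)}(n')\le a_1$ for some earliest such $n'$ in a stage $i\le \tau_j$, then since $a_1\le a_{c_i+1}$ the lower boundary in \eqref{cont-samp} is crossed at stage $i$ with $n_i=n'$. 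The monotonicity of $\{a_w\}$ and of the ordered statistics then gives $\wtilde{\Lambda}^{(j(n_i,\ell'))}(n_i)\le a_1\le a_{c_i+\ell'}$ for every rank $\ell'$ at or below the rank of $j$ in the ordering at time $n_i$, so Step~\ref{acc-step} forces $j$ to be accepted at stage $i$, contradicting either the rejection of $j$ at $\tau_j>i$, or, in the boundary case $i=\tau_j$, the no-conflict assertion in Remark~\ref{rem:D.def}(A). Hence the event
\begin{equation*}
E_j^{(k)}=\left\{\wtilde{\Lambda}^{(j)}(n)\ge b_k\;\mbox{some $n$,}\;\wtilde{\Lambda}^{(j)}(n')>a_1\;\mbox{all $n'<n$}\right\}
\end{equation*}
occurs, so $P_\theta(E_j^{(k)})\le \alpha_k$ whenever $j\in\mathcal{T}(\theta)$, by \eqref{typeI.stand}.

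The remainder is a counting-plus-Markov argument. Let $t=|\mathcal{T}(\theta)|$; we may assume $t\ge k_1$, else $\fweI(\theta)=0$ trivially. If at least $k_1$ true nulls are rejected then, since at most $J-t$ false nulls can precede them in the rejection order, at least $k_1$ of those true-null indices have overall rejection rank at most $k_1+(J-t)$, and hence by the monotonicity of $\{b_w\}$ the event $E_j^{(k_1+J-t)}$ occurs for at least $k_1$ indices $j\in\mathcal{T}(\theta)$. Markov's inequality then gives
\begin{equation*}
\fweI(\theta)\le \frac{1}{k_1}\sum_{j\in\mathcal{T}(\theta)} P_\theta\!\left(E_j^{(k_1+J-t)}\right)\le \frac{t\,\alpha_{k_1+J-t}}{k_1}=\alpha,
\end{equation*}
where the last equality substitutes \eqref{cons.Dk} at $j=k_1+J-t$, where $(j-k_1)^+=J-t$, yielding $\alpha_{k_1+J-t}=k_1\alpha/t$; this is precisely the calibration motivating \eqref{cons.Dk}. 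The main obstacle I expect is the sublemma on $a_1$-non-crossings: because the procedure's effective lower boundary at stage $i$ is the variable $a_{c_i+1}$ rather than $a_1$ itself, and because several streams may hit their boundaries simultaneously at a stage end, one must verify carefully that the stepdown acceptance logic always captures any such stream and does not allow a rejected null's trajectory to secretly have dropped below $a_1$ earlier.
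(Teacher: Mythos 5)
Your proof is correct and follows essentially the same route as the paper's: bound the overall rejection rank of the $k_1$-th rejected true null by $J-|\mathcal{T}(\theta)|+k_1$, observe that each rejected true null's statistic crosses the corresponding $b$-boundary before ever falling to $a_1$, and apply Markov's inequality to the count $\sum_{j\in\mathcal{T}(\theta)}\bm{1}_{W(j,\,b_{J-|\mathcal{T}(\theta)|+k_1})}$. The only difference is that you spell out the sublemma that a rejected null's trajectory cannot have dropped below $a_1$ earlier (via the stepdown acceptance rule and the no-conflict property), a point the paper absorbs into the phrase ``by definition of $r^*$ and the stepdown procedure.''
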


\begin{remark}\label{rem:opt.Dk}
\citet[][Theorem~2.3]{Lehmann05b} exhibit a distribution of fixed sample size $p$-values for which the achieved (type~I) FWER is exactly the prescribed value~$\alpha$. The fixed sample size setting being a special case of the sequential setup considered here (by taking $X_1^{(j)}$ in \eqref{streams} to be the fixed sample size data and $X_n^{(j)}=\emptyset$ for $n>1$), applying their example to both true and false null hypotheses shows that there is a distribution for the data such that both the inequalities in \eqref{FWE<.Dk} are equalities. In this sense the bounds \eqref{FWE<.Dk} are sharp.
\end{remark}

\subsection{Stepup Procedures}\label{sec:U}

In this section we develop stepup procedures analogously to what was done for stepdown procedures in Section~\ref{sec:D}.

\subsubsection{The Generic Sequential Stepup Procedure}\label{sec:gen.U}

Here we define a generic sequential stepup procedure, special cases of which will be used to define the type~I and II $k$-FWER and $\gamma$-FDP controlling sequential procedures below.  We assume that  step values $\{\alpha_j, \beta_j\}_{j\in[J]}$ satisfying \eqref{UD.values} are given and that the test statistics and critical values satisfy the assumptions in Section~\ref{sec:stats} with respect to these values.

As for the generic sequential stepdown procedure in Section~\ref{sec:gen.D}, we describe the stepup procedure in terms of stages of sampling, between which reject/accept decisions are made, and we use the same notation $\mJ_i$, $n_i$, $r_i$, and $c_i$ as there, with $\mJ_1=[J]$, $n_0=0$, and $r_1=c_1 =0$. Then the $i$th stage of sampling ($i=1,2,\ldots$) of the \textbf{Generic Sequential Stepup Procedure} with step values $\{\alpha_j, \beta_j\}_{j\in[J]}$ proceeds as follows.

\begin{enumerate}
\item\label{fdrsample-step} Sample the active data streams $\{X_n^{(j)}\}_{j\in\mJ_i, \; n>n_{i-1}}$ until $n$ equals
\begin{equation}\label{fdrcont-samp}
n_i=\inf\left\{n>n_{i-1}: \wtilde{ \Lambda}^{(j(n,\ell))}(n)\not\in(a_{c_i+\ell}, b_{r_i+|\mJ_i|-\ell+1})\qmq{for some} \ell\in[|\mJ_i|]\right\},
\end{equation}
where $j(n,\ell)$ denotes the index of the $\ell$th ordered active standardized statistic at sample size $n$.

\item  
\begin{enumerate}
\item\label{fdrrej-step} If an upper boundary in \eqref{fdrcont-samp} was crossed, that is, if 
\begin{equation*}
\wtilde{ \Lambda}^{(j(n_i,\ell))}(n_i)\ge b_{r_i+|\mJ_i|-\ell+1}\qmq{for some} \ell\in[|\mJ_i|],
\end{equation*}
 then reject the $m_i\ge 1$ null hypotheses 
 \begin{equation*}
H^{(j(n_i,|\mJ_i|))}, H^{(j(n_i,|\mJ_i|-1))}, \ldots, H^{(j(n_i,|\mJ_i|-m_i+1))},
\end{equation*}
where 
\begin{equation}\label{fdrmjrej}
m_i=\max\left\{m\in [|\mJ_i|]: \wtilde{\Lambda}^{(j(n_i,|\mJ_i|-m+1))}(n_i)\ge b_{r_i+m} \right\},
\end{equation} 
and set $r_{i+1}=r_i+m_i$. Otherwise set $r_{i+1}=r_i$.

\item\label{fdracc-step} If a lower boundary in \eqref{fdrcont-samp} was crossed, that is, if 
\begin{equation*}
\wtilde{ \Lambda}^{(j(n_i,\ell))}(n_i)\le  a_{c_i+\ell}\qmq{for some} \ell\in[|\mJ_i|],
\end{equation*}  then accept the $m_i'\ge 1$ null hypotheses $$H^{(j(n_i,m_i'))}, H^{(j(n_i,m_i'-1))},\ldots, H^{(j(n_i,1))},$$ where 
\begin{equation*}
m_i'=\max\left\{m\in [|\mJ_i|] : \wtilde{\Lambda}^{(j(n_i,m))}(n_i)\le a_{c_i+m} \right\},
\end{equation*} and set $c_{i+1}=c_i+m_i'$. Otherwise set $c_{i+1}=c_i$.

\end{enumerate}

\item\label{fdrstop-step} Stop if there are no remaining active hypotheses, i.e., if $r_{i+1}+c_{i+1}=J$.  Otherwise, let $\mJ_{i+1}$ be the indices of the remaining active hypotheses and continue on to stage~$i+1$.
\end{enumerate} 

In other words, the procedure samples all active data streams until at least one of the active null hypotheses can be accepted or rejected, indicated by the stopping rule~\eqref{fdrcont-samp}. At that point, stepup rejection/acceptance rules are used in steps \ref{fdrrej-step}/\ref{fdracc-step} to reject/accept some active null hypotheses. After updating the list of active hypotheses, the process is repeated until no active hypotheses remain.

\begin{remark}
Points analogous to Remark~\ref{rem:D.def} apply to the generic sequential stepup procedure as well.
\end{remark}

\subsubsection{A Stepup Procedure Controlling $\gfdp$ and $\gfnp$}\label{sec:UP.procs}
The following step values were proposed by \citet{Romano06}. Given a nondecreasing sequence $0\le \delta_1\le\ldots\le \delta_J\le 1$, for $\gamma\in[0,1)$ and $v\in[J]$  define
\begin{align*}
S_2(v,\gamma, \{\delta_j\})&= v\delta_1+ v\sum_{v-J+1<s\le v,\; v\ge \lfloor \gamma(J-v+s)\rfloor+1} \frac{\delta_{J-v+s}-\delta_{J-v+s-1}}{s\vee (\lfloor \gamma (J-v+s)\rfloor +1)},\\
D_2(\gamma,\{\delta_j\})&=\max_{v\in [J]} S_2(v, \gamma, \{\delta_j\}).
\end{align*}
Here $x\vee y=\max\{x,y\}$. These quantities also depend on $J$ but we have suppressed this in the notation since $J$ is fixed throughout. 

\begin{theorem}\label{thm:UP} Fix $\alpha, \beta\in(0,1)$ and $\gamma_1,\gamma_2\in [0,1)$.  Given any sequences of constants $0\le \delta_1\le\ldots\le \delta_J\le 1$ and $0\le \eta_1\le\ldots\le \eta_J\le 1$, define
\begin{equation}\label{UP.gen.cons}
\alpha_j=\frac{\alpha\delta_j}{D_2(\gamma_1,\{\delta_{j'}\})},\quad \beta_j=\frac{\beta\eta_j}{D_2(\gamma_2,\{\eta_{j'}\})},\quad j\in[J]. 
\end{equation}
If the test statistics and critical values satisfy the assumptions in Section~\ref{sec:stats} for these $\{\alpha_j,\beta_j\}_{j\in[J]}$, then the sequential stepup procedure with step values \eqref{UP.gen.cons} satisfies 
\begin{equation*}
\gfdp(\theta)\le\alpha\qmq{and} \gfnp(\theta)\le\beta\qmq{for all}\theta\in\Theta 
\end{equation*}
regardless of the dependence between data streams.
\end{theorem}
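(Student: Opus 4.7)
My plan is to adapt the fixed-sample stepup argument of \citet{Romano06} to the sequential setting. By the symmetry in the generic stepup procedure between rejection and acceptance (swap $a_w\leftrightarrow b_w$, $\alpha_w\leftrightarrow\beta_w$, $\mathcal{T}\leftrightarrow\mathcal{F}$, $\gamma_1\leftrightarrow\gamma_2$, and \eqref{typeI.stand}$\leftrightarrow$\eqref{typeII.stand}), it suffices to prove $\gfdp(\theta)\le\alpha$; the $\gfnp$ bound follows from the identical argument under these substitutions.

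The crucial reduction is to replace the joint sequential event $\{H^{(j)}\mbox{ is rejected}\}$ for a true null ($j\in\mathcal{T}(\theta)$) by a marginal event on stream $j$ alone whose probability is directly bounded by \eqref{typeI.stand}. Specifically, I would show that if $H^{(j)}$ is rejected and the total number of rejections at termination is $R$, then
\begin{equation*}
E_R^{(j)} = \{\wtilde{\Lambda}^{(j)}(n)\ge b_R \;\mbox{some $n$,}\; \wtilde{\Lambda}^{(j)}(n') > a_1 \;\mbox{all $n'<n$}\}
\end{equation*}
must occur. The first condition follows from the definition of $m_i$ in \eqref{fdrmjrej}: if $H^{(j)}$ is rejected at stage $i$, the rank $\ell$ of $\wtilde{\Lambda}^{(j)}(n_i)$ among the active statistics satisfies $\ell\ge|\mJ_i|-m_i+1$, so by monotonicity $\wtilde{\Lambda}^{(j)}(n_i)\ge b_{r_i+m_i}=b_{r_{i+1}}\ge b_R$. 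The second condition holds because if $\wtilde{\Lambda}^{(j)}$ had reached $a_1$ at some earlier sample size, then by the stopping rule \eqref{fdrcont-samp} that time would be the stopping time of some prior stage and $j$ would have been accepted at Step~\ref{fdracc-step}, contradicting $j$'s activity through stage $i$. Then \eqref{typeI.stand} yields $P_\theta(E_w^{(j)})\le\alpha_w$ for all $w\in[J]$ and $j\in\mathcal{T}(\theta)$, regardless of the joint distribution across streams.

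With this reduction in place the remainder reduces to a fixed-sample-style counting argument. Write $v=|\mathcal{T}(\theta)|$, and let $V$ denote the number of rejected true nulls and $R$ the total number of rejections; then $\{\mathrm{FDP}>\gamma_1\}=\{V\ge\lfloor\gamma_1 R\rfloor+1\}$. I would establish, by a double-counting argument mirroring \citet{Romano06}, an inequality of the form
\begin{equation*}
1\{\mathrm{FDP}>\gamma_1\}\le \sum_{j\in\mathcal{T}(\theta)}\sum_{w\in[J]} c_{w,v}\,1\{E_w^{(j)}\},
\end{equation*}
with coefficients $c_{w,v}\ge 0$ depending only on $v$, $\gamma_1$, and $\{\delta_{w'}\}$. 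Taking expectations, applying $P_\theta(E_w^{(j)})\le\alpha_w=\alpha\delta_w/D_2(\gamma_1,\{\delta_{j'}\})$, and recognizing the resulting telescoping sum as $\alpha\, S_2(v,\gamma_1,\{\delta_{j'}\})/D_2(\gamma_1,\{\delta_{j'}\})\le\alpha$ then completes the proof. The main obstacle is verifying this counting inequality and identifying the precise coefficients $c_{w,v}$; this is where the peculiar form of $S_2$, in particular its $s\vee(\lfloor\gamma(J-v+s)\rfloor+1)$ denominator, arises naturally as the worst case, and $D_2$ is defined as the maximum over $v$ precisely so that the final bound holds for any unknown number of true nulls.
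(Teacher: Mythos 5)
Your plan follows essentially the same route as the paper's proof: the reduction of each rejected true null to the marginal crossing event ``$\wtilde{\Lambda}^{(j)}$ reaches $b_R$ before $a_1$'' is exactly the content of Lemma~\ref{lem:Rs<V.U}, established there by the same rank-plus-monotonicity argument (via \eqref{fdrmjrej} and the non-increasing $b_w$) and the same contradiction with the acceptance step that you give. The counting inequality you defer as the ``main obstacle'' is supplied in the paper by Lemma~\ref{lem:DPV} (a sequential version of Lehmann--Romano's Lemma~3.1, proved by precisely the double-counting/Markov argument you anticipate) combined with the inclusion \eqref{T<V.UP}, a reindexing that collapses the union over the rejection count via $\sigma(s)=\gamma(J+s-|\mT|)\vee s$, and a term-by-term identification of the resulting telescoping sum with $\alpha S_2(|\mT|,\gamma_1,\{\delta_j\})/D_2\le\alpha$; so your outline is correct and carries through as written.
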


\begin{remark}\label{rem:dn.UP} A special case of the theorem that will likely be useful in practice is given by \eqref{dn.Holm.DP}. Of course there are other possibilities, such as $\delta_j=\eta_j=j/J$, which give step values proportional to the ones used in the original FDR-controlling procedure of \citet{Benjamini95}, although \citet[][p.~1865]{Romano06} found these to be smaller (and thus less desirable), for the most part, than the step  values~\eqref{UP.gen.cons} given by \eqref{dn.Holm.DP}.
 \end{remark}

\begin{remark}\label{rem:opt.UP} \citet[][Theorem~4.1(ii)]{Romano06b} exhibit a joint distribution of $p$-values under which the procedure using step values~\eqref{cons.Uk} achieves $\gfdp(\theta) =\alpha$. Since, as mentioned in Remark~\ref{rem:opt.Dk}, the fixed-sample setting is a special case of the sequential setting, their example applies here as well, and the same argument gives a joint distribution under which $\gfnp=\beta$. Thus, their result provides a weak optimality property of the sequential stepup procedure.
 \end{remark}

\subsubsection{A Stepup Procedure Controlling $\fweI$ and $\fweII$}
The following step values were proposed by  \citet{Romano06}. Given a nondecreasing sequence $0\le \delta_1\le\ldots\le \delta_J\le 1$, for $k,v\in[J]$  define
\begin{align*}
S_3(v,k, \{\delta_j\})&= \frac{v\delta_{J-v+k}}{k}+ v\sum_{k<s\le v} \frac{\delta_{J-v+s}-\delta_{J-v+s-1}}{s},\\
D_3(k,\{\delta_j\})&=\max_{k\le v\le J} S_3(v, k, \{\delta_j\}).
\end{align*}
These quantities also depend on $J$ but we have suppressed this in the notation since $J$ is fixed throughout. 

\begin{theorem}\label{thm:Uk} Fix $\alpha, \beta\in (0,1)$ and  $k_1,k_2\in[J]$.  Given any sequences of constants $0\le \delta_1\le\ldots\le \delta_J\le 1$ and $0\le \eta_1\le\ldots\le \eta_J\le 1$, define
\begin{equation}\label{cons.Uk}
\alpha_j=\frac{\alpha\delta_j}{D_3(k_1,\{\delta_{j'}\})},\quad \beta_j=\frac{\beta\eta_j}{D_3(k_2,\{\eta_{j'}\})},\quad j\in[J]. 
\end{equation}
If the test statistics and critical values satisfy the assumptions in Section~\ref{sec:stats} for these $\{\alpha_j,\beta_j\}_{j\in[J]}$, then the sequential stepup procedure with step values \eqref{cons.Uk} satisfies
\begin{equation*}
\fweI(\theta) \le\alpha\qmq{and} \fweII(\theta) \le\beta\qmq{for all}\theta\in\Theta 
\end{equation*}
regardless of the dependence between data streams.
\end{theorem}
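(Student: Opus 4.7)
The plan is to mirror the proof of the stepdown analogue Theorem~\ref{thm:Dk}, adapting it to the stepup structure and using the same device that carries the fixed-sample $k$-FWER argument of \citet{Romano06} over to the sequential setting. I will carry out the type~I bound $\fweI(\theta)\le\alpha$; the type~II bound is identical after swapping upper and lower boundaries, replacing $(k_1,\alpha,\{\delta_j\})$ by $(k_2,\beta,\{\eta_j\})$, and replacing $\mT$ by $\mF$.

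Fix $\theta\in\Theta$, let $\mT=\mT(\theta)$, $t=|\mT|$, and let $V$ denote the number of rejected true nulls; we may assume $t\ge k_1$. For $j\in\mT$ and $w\in[J]$ define
\[
E_w^{(j)}=\left\{\wtilde{\Lambda}^{(j)}(n)\ge b_w\ \text{for some}\ n,\ \wtilde{\Lambda}^{(j)}(n')>a_1\ \text{for all}\ n'<n\right\}.
\]
By \eqref{typeI.stand}, $P_\theta(E_w^{(j)})\le\alpha_w$, and by \eqref{AB.mono}, $E_w^{(j)}\subseteq E_{w+1}^{(j)}$. The pivotal structural fact is that whenever $H^{(j)}$ is rejected at stage~$i$, the stepup criterion~\eqref{fdrmjrej} forces $\wtilde{\Lambda}^{(j)}(n_i)\ge b_{r_i+m_i}=b_{r_{i+1}}$, while the sampling rule~\eqref{fdrcont-samp} ensures $\wtilde{\Lambda}^{(j)}(n')>a_{c_{i'}+1}\ge a_1$ at all earlier times $n'<n_i$. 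Hence every rejected true null $j$ realizes $E_{r_{i_j+1}}^{(j)}\subseteq E_{R_\mathrm{final}}^{(j)}$, where $R_\mathrm{final}$ is the total number of rejections.

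Because at most $J-t$ false nulls can be rejected, $R_\mathrm{final}\le(J-t)+V$, and on $\{V\ge k_1\}$ at least $k_1$ of the events $E_{R_\mathrm{final}}^{(j)}$ with $j\in\mT$ occur simultaneously. The remainder now parallels the fixed-sample stepup calculation of \citet[Theorem~3.1]{Romano06}. Writing $\alpha_w-\alpha_{w-1}=\alpha(\delta_w-\delta_{w-1})/D_3(k_1,\{\delta_{j'}\})$ (with the convention $\delta_0=\alpha_0=0$), a summation by parts applied to the union bound (stratified by the random value of $R_\mathrm{final}$) yields
\[
P_\theta(V\ge k_1)\le\frac{t\,\alpha_{J-t+k_1}}{k_1}+t\sum_{k_1<s\le t}\frac{\alpha_{J-t+s}-\alpha_{J-t+s-1}}{s}=\frac{\alpha\,S_3(t,k_1,\{\delta_{j'}\})}{D_3(k_1,\{\delta_{j'}\})}\le\alpha,
\]
the last inequality being the definition of $D_3$ as the maximum of $S_3(v,k_1,\{\delta_{j'}\})$ over $v\in[k_1,J]$.

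The main obstacle is this summation-by-parts step. Since the joint dependence between streams is arbitrary, only the per-stream bound $P_\theta(E_w^{(j)})\le\alpha_w$ and the set monotonicity $E_w^{(j)}\subseteq E_{w+1}^{(j)}$ are available; a naive term-by-term union bound would give a quantity of order $\sum_w\alpha_w$, which is far too large. Reorganizing the bound as a telescoping sum weighted by $\delta_w-\delta_{w-1}$, together with the factor $t/s$ arising from the stepup combinatorics (counting how many true nulls can occupy the top $s$ active ranks at termination), is precisely what makes $D_3(k_1,\{\delta_{j'}\})$ just large enough to absorb the combinatorial factor and recover $\alpha$. Once the reduction in Step~2 has translated every true-null rejection into an occurrence of $E_{R_\mathrm{final}}^{(j)}$, the remainder of the proof is identical to Romano's fixed-sample calculation.
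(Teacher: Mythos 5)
Your proposal is correct and follows essentially the same route as the paper: your ``pivotal structural fact'' is the paper's Lemma~\ref{lem:Rs<V.U} (every rejected true null crosses $b_{R_{\mathrm{final}}}$ without first dipping below $a_1$), your bound $V\ge k_1\vee(R_{\mathrm{final}}+|\mT|-J)$ is the paper's inclusion \eqref{T<V.Uk}, and the final ``summation by parts'' step is exactly the paper's Lemma~\ref{lem:DPV} (the sequential extension of Lehmann--Romano's Lemma~3.1) applied after collapsing the union over $s\le J-|\mT|+k_1$ into the single event $V(k_1,b_{J-|\mT|+k_1})$. The only place you are lighter than the paper is in not spelling out that lemma's first-occurrence/Markov argument, but since you correctly identify it with the fixed-sample calculation of Romano and Shaikh, the proof is complete in substance.
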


\begin{remark}\label{rem:dn.Uk} A special case of the theorem that will likely be useful in practice is given by the constants
\begin{equation}\label{dn.Uk}
\delta_j=\frac{k_1}{J-(j-k_1)^+},\quad \eta_j=\frac{k_2}{J-(j-k_2)^+},\quad j\in[J],
\end{equation} which are proportional to those proposed by \citet{Hommel88} and \citet{Lehmann05b}, as well as \eqref{cons.Dk} in the proposed stepdown procedure. Other possibilities exist, such as  $\delta_j=\eta_j=j/J$, but \citet[][p.~1859]{Romano06} computed the resulting step  values~\eqref{cons.Uk} for both of these choices and found that those given by \eqref{dn.Uk} to be larger (and hence more desirable) than those given by $j/J$ for large or small values of $j$, and smaller for moderate values of $j$ but differing by relatively little in this case.
 \end{remark}

\begin{remark}\label{rem:opt.Uk} \citet[][Theorem~3.1(ii)]{Romano06} exhibited a joint distribution of $p$-values under which the procedure using step values~\eqref{cons.Uk} achieves $\fweI(\theta)=\alpha$. Since, as mentioned in Remark~\ref{rem:opt.Dk}, the fixed-sample setting is a special case of the sequential setting, their example applies here as well, and the same argument gives a joint distribution under which $\fweII(\theta)=\beta$. Thus, their result provides a weak optimality property of the sequential stepup procedure.
 \end{remark}
 
\section{Versions of the Procedures  Controlling only the Type I Generalized Error Rate}\label{sec:rej}

In this section we describe versions of the above procedures which only stop early to reject (rather than accept) null hypotheses  and thus which only explicitly control the corresponding type~I generalized error rate, recorded in Theorems~\ref{thm:D.rej} and \ref{thm:U.rej}.  For this reason we refer to them as ``rejective'' versions of the procedures.  The rejective procedures may be preferable to the statistician in certain situations such as when (a) a null hypothesis being true represents the system being ``in control'' and therefore continued sampling (rather than stopping) is desirable, (b) there is a maximum sample size imposed on the data streams preventing achievement of the error bounds \eqref{typeI}-\eqref{typeII}, or (c) the type~II generalized error rate~$\beta$ is not well-motivated.  In any of theses cases, the statistician may prefer to drop the requirement that the type~II generalized error rate be strictly controlled at $\beta$ and use one of the rejective procedures which, roughly speaking, are similar to those above but ignore the lower stopping boundaries $A_w^{(j)}$. However, even if $\beta$ is not well motivated but the statistician prefers early stopping under the null hypotheses, then we encourage the use of the procedures above while treating $\beta$ as  a parameter to be chosen to give a procedure with other desirable operating characteristics, such as expected total or streamwise maximum sample size.

The setup for rejective procedures is similar to that above with a few modifications. Let the data streams $X_n^{(j)}$, test statistics $\Lambda^{(j)}(n)$, and parameters $\theta^{(j)}$ and $\theta$ be as in Section~\ref{sec:setup}. Since only the type~I error rate, i.e., $\gfdp$ or $\fweI$, will be explicitly controlled we only require specification of null hypotheses $H^{(j)}\subseteq\Theta^{(j)}$ and not alternative hypotheses $G^{(j)}$. Accordingly we modify the definition of the false hypotheses~\eqref{F} to be 
\begin{equation*}
\mF(\theta)=\{j\in[J]:\theta^{(j)}\not\in H^{(j)}\},
\end{equation*}
and the true hypotheses $\mT(\theta)$ are still given by \eqref{T}. As mentioned in point~(b) above, one situation in which the rejective procedures may be be desirable is when there is a streamwise maximum sample size (or ``truncation point'') $\overline{N}$ which we now assume in this section, although with only notational changes what follows could be formulated without a truncation point or with sample sizes other than  $1,\ldots,\overline{N}$. 

Given a sequence of step values $0\le\alpha_1\le\ldots\le\alpha_J\le 1$, we assume that the test statistics~$\Lambda^{(j)}(n)$ have associated critical values $B_1^{(j)},\ldots,B_J^{(j)}$ satisfying 
\begin{equation}
\label{typeI.rej}
P_{\theta^{(j)}}\left( \Lambda^{(j)}(n)\ge B_w^{(j)}\;\mbox{for some $n\le\overline{N}$}\right)\le \alpha_w\qmq{for all} \theta^{(j)} \in H^{(j)},
\end{equation} for each $w\in[J]$, as well as \eqref{AB.mono} and \eqref{B.mono} without loss of generality. We let the standardizing functions~$\varphi^{(j)}$ be any increasing functions such that $b_w=\varphi^{(j)}(B_w^{(j)})$ does not depend on $j$, and $\wtilde{\Lambda}^{(j)}(n)=\varphi^{(j)}(\Lambda^{(j)}(n))$ denote the standardized statistics.

In the next two sections we give the rejective versions of the generic stepdown and stepup procedures in Sections~\ref{sec:gen.D} and \ref{sec:gen.U}, respectively, and state their type~I generalized error control properties in Theorems~\ref{thm:D.rej} and \ref{thm:U.rej}. The proofs are similar to the proofs of the corresponding theorems in Section~\ref{sec:I&II} and are thus omitted.

\subsection{Rejective Sequential Stepdown Procedures}\label{sec:RD.gen}
Letting $x\wedge y=\min\{x,y\}$ and with the same notation as in Section~\ref{sec:gen.D}, the $i$th stage ($i=1,2,\ldots$) of the \textbf{Generic Rejective Sequential Stepdown Procedure} with step values $\{\alpha_j\}_{j\in[J]}$ proceeds as follows.

\begin{enumerate}
\item Sample the active streams $\{X_n^{(j)}\}_{j\in\mJ_i, \; n>n_{i-1}}$ until $n$ equals
\begin{equation}\label{cont-samp.rej}n_i=\oN\wedge \inf\left\{n>n_{i-1}: \wtilde{\Lambda}^{(j)}(n) \ge b_{r_i+1}\qmq{for some} j\in\mJ_i \right\}.\end{equation}

\item If $n_i=\oN$ and no test statistic has crossed the critical value in \eqref{cont-samp.rej}, accept all active null hypotheses and terminate the procedure.  Otherwise, proceed to Step~\ref{step:ord.rej}.

\item\label{step:ord.rej} Order the active test statistics
\begin{equation*}
\wtilde{\Lambda}^{(j(n_i,1))}(n_i)\le \wtilde{\Lambda}^{(j(n_i,2))}(n_i)\le \ldots\le \wtilde{\Lambda}^{(j(n_i,|\mJ_i|))}(n_i)
\end{equation*} 
and reject the $m_i\ge 1$ null hypotheses 
\begin{equation*}
H^{(j(n_i,|\mJ_i|))}, H^{(j(n_i,|\mJ_i|-1))}, \ldots, H^{(j(n_i,|\mJ_i|-m_i+1))},
\end{equation*}
where 
\begin{equation*}
m_i=\max\left\{m\in [|\mJ_i|]: \wtilde{\Lambda}^{(j(n_i,\ell))}(n_i)\ge b_{r_i+|\mJ_i|-\ell+1}\qmq{for all}\ell=|\mJ_i|-m+1,\ldots, |\mJ_i| \right\}.
\end{equation*} 

\item If $r_i+m_i=J$ or $n_i=\oN$, terminate the procedure. Otherwise, set $r_{i+1}=r_i+m_i$, let $\mJ_{i+1}$ be the indices of the remaining hypotheses, and continue on to stage $i+1$.
\end{enumerate} 

\begin{remark}
Points analogous to Remark~\ref{rem:D.def} (except Point~\ref{D.no.confl} which doesn't apply since there is no early acceptance rule) apply to the generic rejective sequential stepdown procedure as well.
\end{remark}

\begin{theorem}\label{thm:D.rej} Fix $\alpha \in (0,1)$.  
\begin{enumerate}
\item\label{part:rej.DP}  Fix $\gamma_1\in[0,1)$. Given any sequence of constants $0\le \delta_1\le\ldots\le \delta_J\le 1$ let $\alpha_j$ be given by \eqref{DP.gen.cons}. If the test statistics and critical values satisfy the assumptions above for these $\alpha_j$, then the rejective sequential stepdown procedure with step values \eqref{DP.gen.cons} satisfies $\gfdp(\theta)\le\alpha$ regardless of the dependence between data streams.

\item\label{part:rej.Dk} Fix $k_1 \in[J]$ and let $\alpha_j$ be given by \eqref{cons.Dk}. If the test statistics and critical values satisfy the assumptions above for these $\alpha_j$, then the rejective sequential stepdown procedure with step values \eqref{cons.Dk} satisfies $\fweI(\theta) \le\alpha$ regardless of the dependence between data streams.
\end{enumerate}
\end{theorem}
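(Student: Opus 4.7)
My plan is to mirror the proofs of Theorems~\ref{thm:DP} and~\ref{thm:Dk} in the rejective setting. The absence of a lower acceptance boundary plays no role in the type~I argument, and the truncation at $\overline{N}$ only restricts the window during which upper-boundary crossings can be credited.

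The first step is to convert the sequential problem to a fixed-sample-looking ``$p$-value'' problem. For each $j \in [J]$, define the sequential $p$-value index
$$
\pi^{(j)} = \min\bigl\{ w \in [J] : \wtilde{\Lambda}^{(j)}(n) \ge b_w \text{ for some } n \le \overline{N} \bigr\},
$$
with $\pi^{(j)} = J+1$ if no such $w$ exists. Assumption~\eqref{typeI.rej} together with the monotonicity~\eqref{AB.mono} immediately gives $P_\theta(\pi^{(j)} \le w) \le \alpha_w$ for every true null $j \in \mT(\theta)$ and every $w \in [J]$, so the $\pi^{(j)}$ satisfy the same marginal tail bound as the $p$-values used in the fixed-sample stepdown analyses of \citet{Romano06b} and \citet{Lehmann05b}.

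The second step is a structural lemma relating rejections to the $\pi^{(j)}$: if $H^{(j)}$ is rejected in stage $i$ as the $k$-th rejection of that stage (with $k = 1$ denoting the active stream carrying the largest standardized statistic at $n_i$), then by construction $\wtilde{\Lambda}^{(j)}(n_i) \ge b_{r_i + k}$ at a time $n_i \le \overline{N}$, whence $\pi^{(j)} \le r_i + k$. Enumerating rejections across stages in procedure order, the $m$-th rejection overall therefore satisfies $\pi^{(j)} \le m$. Given this lemma together with the marginal tail bound from the first step, the proofs of Theorems~\ref{thm:DP} and~\ref{thm:Dk} transfer essentially verbatim with $\pi^{(j)}$ replacing the $p$-values: one decomposes the bad event by the realized rejection count $R$, bounds the number of true nulls $j \in \mT(\theta)$ contributing to this count using the marginal bounds $P_\theta(\pi^{(j)} \le w) \le \alpha_w$ in a Markov/union-type argument, and closes with the combinatorial identities $D_1(\gamma_1, \{\delta_{j'}\})$ for part~\ref{part:rej.DP} or the Lehmann--Romano rearrangement underlying~\eqref{cons.Dk} for part~\ref{part:rej.Dk}.

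The main obstacle is the structural lemma, which requires careful bookkeeping across stages: one must verify that the within-stage ordering supplied by step~\ref{step:ord.rej} combined with the accumulation of $r_i$ across stages yields the clean bound ``$m$-th rejection has $\pi^{(j)} \le m$.'' This is handled by observing that $r_i$ is measurable with respect to the data from stages $1, \ldots, i-1$, so conditionally on that history the $i$th stage reduces to a fixed-sample stepdown on the currently active streams, with starting offset $r_i$ and thresholds $b_{r_i+1}, \ldots, b_{r_i+|\mJ_i|}$. The truncation is benign because~\eqref{cont-samp.rej} forces every rejection time $n_i$ to satisfy $n_i \le \overline{N}$, and any stream never crossing an upper boundary within $[1, \overline{N}]$ carries $\pi^{(j)} = J+1$ and contributes nothing to the rejection count.
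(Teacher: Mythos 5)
Your proposal is correct and follows essentially the same route as the paper, which omits this proof precisely because it is the argument of Theorems~\ref{thm:DP} and \ref{thm:Dk} with the lower boundary removed: your index $\pi^{(j)}$ is a repackaging of the events $W(j,b_w)$ from \eqref{T>bM} (minus the $a_1$ clause), your marginal bound $P_\theta(\pi^{(j)}\le w)\le\alpha_w$ is the condition $p^{(j)}(b_w)\le\alpha_w$ furnished by \eqref{typeI.rej}, and your structural lemma (``the $m$-th rejection overall has $\pi^{(j)}\le m$'') is the same stage-accumulation fact the paper uses to establish $F\subseteq V(s^*,b_{r^*})$ and $M(b_{r^*})\ge k_1$. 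The remaining combinatorics --- Lemma~\ref{lem:DPV}, the bounds on $r^*$ and $s^*$, and the Markov inequality step --- indeed carry over unchanged.
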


\begin{remark}
As mentioned in Remark~\ref{rem:dn.DP}, the $\delta_j$ given in \eqref{dn.Holm.DP} may be useful in practice for the procedure in Part~\ref{part:rej.DP} of the theorem. Also, the weak optimality mentioned in Remark~\ref{rem:opt.Dk} applies as well to the rejective procedure in Part~\ref{part:rej.Dk} of the theorem.
\end{remark}

\subsection{Rejective Sequential Stepup Procedures}\label{sec:RU.gen}

With the same notation as in Section~\ref{sec:gen.U}, the $i$th stage ($i=1,2,\ldots$) of the \textbf{Generic Rejective Sequential Stepup Procedure} with step values $\{\alpha_j\}_{j\in[J]}$ proceeds as follows.

 \begin{enumerate}
\item Sample the active data streams $\{X_n^{(j)}\}_{j\in\mJ_i, \; n>n_{i-1}}$ until $n$ equals
\begin{equation}\label{fdrcont-samp.rej}
n_i=\oN\wedge \inf\left\{n>n_{i-1}: \wtilde{ \Lambda}^{(j(n,\ell))}(n)\ge b_{r_i+|\mJ_i|-\ell+1}\qmq{for some} \ell\in[|\mJ_i|]\right\}.
\end{equation}

\item If $n_i=\oN$ and no test statistic has crossed its corresponding critical value in \eqref{fdrcont-samp.rej}, accept all active null hypotheses and terminate the procedure.  Otherwise, proceed to Step~\ref{step:rej.U.rej}.

\item\label{step:rej.U.rej} Reject the $m_i\ge 1$ null hypotheses 
 \begin{equation*}
H^{(j(n_i,|\mJ_i|-m_i+1))}, H^{(j(n_i,|\mJ_i|-m_i+2))}, \ldots H^{(j(n_i,|\mJ_i|))}, \end{equation*}
where 
\begin{equation*}
m_i=\max\left\{m\in [|\mJ_i|]: \wtilde{\Lambda}^{(j(n_i,|\mJ_i|-m+1))}(n_i)\ge b_{r_i+m} \right\}.
\end{equation*}

\item  If $r_i+m_i=J$ or $n_i=\oN$, terminate the procedure. Otherwise, set $r_{i+1}=r_i+m_i$, let $\mJ_{i+1}$ be the indices of the remaining hypotheses, and continue on to stage $i+1$.
\end{enumerate} 

\begin{remark}
Points analogous to Remark~\ref{rem:D.def} (except Point~\ref{D.no.confl} which doesn't apply since there is no early acceptance rule) apply to the generic rejective sequential stepup procedure as well.
\end{remark}

\begin{theorem}\label{thm:U.rej} Fix $\alpha \in (0,1)$.  
\begin{enumerate}
\item\label{part:rej.UP}  Fix $\gamma_1\in[0,1)$. Given any sequence of constants $0\le \delta_1\le\ldots\le \delta_J\le 1$ let $\alpha_j$ be given by \eqref{UP.gen.cons}. If the test statistics and critical values satisfy the assumptions above for these $\alpha_j$, then the rejective sequential stepup procedure with step values \eqref{UP.gen.cons} satisfies $\gfdp(\theta)\le\alpha$ regardless of the dependence between data streams.

\item\label{part:rej.Uk} Fix $k_1 \in[J]$. Given any sequence of constants $0\le \delta_1\le\ldots\le \delta_J\le 1$ let $\alpha_j$ be given by \eqref{cons.Uk}. If the test statistics and critical values satisfy the assumptions above for these $\alpha_j$, then the rejective sequential stepup procedure with step values \eqref{cons.Uk} satisfies $\fweI(\theta) \le\alpha$ regardless of the dependence between data streams.
\end{enumerate}
\end{theorem}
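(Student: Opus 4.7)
The plan is to reduce each rejective sequential stepup procedure to the corresponding fixed-sample stepup procedure of \citet{Romano06} applied to suitably defined ``sequential $p$-values,'' and then invoke their results directly. This mirrors the strategy that must underlie the proofs of Theorems~\ref{thm:UP} and \ref{thm:Uk} but is simpler because there is no lower boundary to track.

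First, for each stream $j\in[J]$ I would define the first-crossing level
\begin{equation*}
W^{(j)} = \max\left\{w\in[J]:\wtilde\Lambda^{(j)}(n)\ge b_w\;\mbox{for some $n\le\oN$}\right\},
\end{equation*}
(with $W^{(j)}=0$ if no crossing occurs, and using the convention $\alpha_0=1$), and set the sequential $p$-value
\begin{equation*}
P^{(j)} = \alpha_{W^{(j)}}.
\end{equation*}
By \eqref{typeI.rej} and the monotonicity \eqref{AB.mono}, the $b_w$ are nonincreasing, so higher crossings correspond to smaller levels, and a straightforward calculation gives
\begin{equation*}
P_{\theta^{(j)}}\left(P^{(j)}\le \alpha_w\right)\le \alpha_w\qmq{for all}\theta^{(j)}\in H^{(j)},\; w\in[J].
\end{equation*}
Thus the $P^{(j)}$ behave as valid (marginally super-uniform on the grid) $p$-values on the set of true nulls, with \emph{no} assumption on their joint distribution.

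The key combinatorial step is to show that the set of hypotheses rejected by the generic rejective sequential stepup procedure coincides with the set rejected by the fixed-sample stepup procedure of \citet{Lehmann05b}/\citet{Romano06} applied to $P^{(1)},\ldots,P^{(J)}$ with critical constants $\alpha_1\le\cdots\le\alpha_J$. I would argue this by induction on the stages $i=1,2,\ldots$: the stopping rule \eqref{fdrcont-samp.rej} triggers exactly at the smallest $n\le\oN$ at which some active $\wtilde\Lambda^{(j(n,\ell))}(n)$ crosses $b_{r_i+|\mJ_i|-\ell+1}$, and the resulting rejection count $m_i$ from \eqref{fdrmjrej} picks out precisely those active hypotheses whose $P^{(j)}$-ranks among the remaining statistics satisfy the stepup inequality at threshold indices $r_i+1,\ldots,r_i+m_i$. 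Because the $b_w$ are nonincreasing in $w$, once a stream has been ruled out of rejection in stage $i$ it can only be rejected at a strictly higher-rank threshold in a later stage, matching exactly the ordering used by the single-shot fixed-sample stepup on the full vector of $P^{(j)}$'s. Termination when $n_i=\oN$ with no new crossing leaves each remaining $P^{(j)}>\alpha_{r_i+|\mJ_i|-\ell+1}$ for all admissible $\ell$, so these hypotheses are not rejected by the fixed-sample procedure either.

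With that equivalence established, part \ref{part:rej.UP} follows from \citet[][Theorem~4.1]{Romano06}: their fixed-sample stepup procedure with critical values \eqref{UP.gen.cons} controls $\gfdp$ at level $\alpha$ under arbitrary dependence of the $p$-values, which applies directly to our $P^{(j)}$. Similarly, part \ref{part:rej.Uk} follows from \citet[][Theorem~3.1]{Romano06} applied to the $P^{(j)}$ with critical values \eqref{cons.Uk}. Since the dependence between the $P^{(j)}$ is inherited from the (arbitrary) dependence between the data streams, both conclusions hold regardless of this dependence, as claimed. I expect the main obstacle to be the induction verifying the stage-by-stage equivalence between the multi-stage sequential rejections and the single-shot fixed-sample stepup; once that identification is in place, the error bounds are immediate from the cited Romano--Shaikh results.
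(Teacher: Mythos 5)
Your reduction hinges on the claim that the rejective sequential stepup procedure rejects exactly the same set of hypotheses as the single-shot fixed-sample stepup applied to the first-crossing $p$-values $P^{(j)}=\alpha_{W^{(j)}}$ (as a minor preliminary point, $W^{(j)}$ should be defined with $\min$ rather than $\max$: since the $b_w$ are nonincreasing, crossing any $b_w$ entails crossing $b_J$, so your $W^{(j)}$ as written equals $J$ whenever any threshold is crossed). That equivalence is false. The stopping rule \eqref{fdrcont-samp.rej} compares the \emph{current} value of the $\ell$th ordered active statistic at times $n>n_{i-1}$ to $b_{r_i+|\mJ_i|-\ell+1}$; it has no memory of earlier excursions. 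Take $J=2$ and a path on which $\wtilde{\Lambda}^{(1)}$ exceeds $b_2$ at time $5$ while it is the \emph{larger} of the two statistics (so the threshold then relevant to it is $b_1$, which it never reaches) and afterwards falls below $b_2$ for good, while $\wtilde{\Lambda}^{(2)}$ exceeds $b_2$ only after time $5$ and also never reaches $b_1$. Then no condition in \eqref{fdrcont-samp.rej} is ever met before $\oN$, the sequential procedure rejects nothing, yet $P^{(1)}=P^{(2)}=\alpha_2$ and the fixed-sample stepup rejects both hypotheses. The most one can hope for is that the sequential rejection set is \emph{contained} in the fixed-sample one (which does hold, by an argument along the lines of Lemma~\ref{lem:Rs<V.U}). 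Containment would salvage Part~\ref{part:rej.Uk}, since rejecting fewer hypotheses cannot increase the number of rejected true nulls, but it does not give Part~\ref{part:rej.UP}: $\fdp$ is a ratio, and shrinking the rejection set can increase it, so $\gfdp$ control for the larger fixed-sample rejection set does not transfer to a subset of it.

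The paper's (omitted) proof instead follows the proofs of Theorems~\ref{thm:UP} and \ref{thm:Uk}: it uses only the one-directional containment of Lemma~\ref{lem:Rs<V.U} --- on the event that exactly $s$ hypotheses are rejected, every rejected true null's statistic has crossed $b_s$ --- and then bounds the probability of the resulting union of crossing events directly with the sequential counting lemma (Lemma~\ref{lem:DPV}), which plays the role of Lehmann--Romano's Lemma~3.1. No identification with a fixed-sample procedure is needed, and the $\fdp$ argument tracks both the total number of rejections and the number of rejected true nulls rather than relying on any monotonicity of the rejection set. If you want to keep the reduction flavor of your argument, you would still need to prove the containment and then redo the Romano--Shaikh counting argument on the crossing events $W(j,b_s)$ themselves, which is exactly what the paper does.
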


\begin{remark}
As mentioned in Remarks~\ref{rem:dn.UP} and \ref{rem:dn.Uk}, the $\delta_j$ given in \eqref{dn.Holm.DP} and \eqref{dn.Uk} may be useful in practice for the procedures in Parts~\ref{part:rej.UP} and \ref{part:rej.Uk} of the theorem, respectively. Also, the weak optimality mentioned in Remarks~\ref{rem:opt.UP} and \ref{rem:opt.Uk} applies as well to the rejective procedures in Parts~\ref{part:rej.UP} and \ref{part:rej.Uk} of the theorem, respectively.
\end{remark}

\section{Implementation}\label{sec:imp}

\subsection{Simple vs.\ Simple Hypotheses}\label{sec:simple}
In this section we briefly discuss constructing individual test statistics and critical values satisfying \eqref{typeI}-\eqref{typeII} (or \eqref{typeI.rej} for the rejective versions of the procedures). More  complete discussions, including discussion of testing more general composite hypotheses and examples, are given in  \citet{Bartroff14c,Bartroff14b}. Here we focus on simple hypotheses and those  that can be approximated by simple hypotheses and in Theorem~\ref{thm:simple} we give closed-form expressions for the critical values $A_w^{(j)}, B_w^{(j)}$ satisfying \eqref{typeI}-\eqref{typeII} to a very close approximation, and which are based on the closed-form, widely-used Wald approximations for the sequential probability ratio test (SPRT). Sequential test statistics and critical values for other testing situations, including composite hypotheses and nuisance parameter problems, are covered in the two papers above and more generally in the texts \citet{Bartroff13} and \citet{Siegmund85}.

Focusing on a stream $j$ for which $H^{(j)}$ and $G^{(j)}$ are simple hypotheses, a natural choice for the test statistic~$\Lambda^{(j)}(n)$ is the log-likelihood ratio because of its strong optimality property of the  resulting (single hypothesis) test, the SPRT; see \citet{Chernoff72}. In order to express the likelihood ratio test in  a simple form, we now make the additional assumption that each data stream $X_1^{(j)},X_2^{(j)},\ldots$ constitutes independent and identically distributed data. However, we stress that this independence assumption is limited to \emph{within} each stream so that, for example, elements of $X_1^{(j)},X_2^{(j)},\ldots$ may be correlated with (or even identical to) elements of another stream $X_1^{(j')},X_2^{(j')},\ldots$.   Formally we represent the simple null and alternative hypotheses $H^{(j)}$ and $G^{(j)}$ by the corresponding distinct density functions $h^{(j)}$ (null) and $g^{(j)}$ (alternative) with respect to some common $\sigma$-finite measure $\mu^{(j)}$.  The parameter space~$\Theta^{(j)}$ corresponding to this data stream is the set of all densities $f$ with respect to $\mu^{(j)}$, and $H^{(j)}$ is considered true if the actual density~$f^{(j)}$ satisfies $f^{(j)}=h^{(j)}$ $\mu^{(j)}$-a.s., and is false if $f^{(j)}=g^{(j)}$ $\mu^{(j)}$-a.s. The SPRT for testing $H^{(j)}: f^{(j)}=h^{(j)}$ vs.\ $G^{(j)}: f^{(j)}=g^{(j)}$ with type I and II error probabilities $\alpha$ and $\beta$, respectively, utilizes the simple log-likelihood ratio test statistic 
\begin{equation}\label{simpleLLR}
\Lambda^{(j)}(n)=\sum_{i=1}^n \log\left(\frac{g^{(j)}(X_{i}^{(j)})}{h^{(j)}(X_{i}^{(j)})}\right)
\end{equation} and samples sequentially until $\Lambda^{(j)}(n)\not\in(A, B)$, where the critical values $A, B$ satisfy
\begin{align}
P_{h^{(j)}}(\Lambda^{(j)}(n)\ge B\;\mbox{some $n$,}\; \Lambda^{(j)}(n')>A\;\mbox{all $n'<n$})&\le \alpha\label{SPRT-typeI}\\
P_{g^{(j)}}(\Lambda^{(j)}(n)\le A\;\mbox{some $n$,}\; \Lambda^{(j)}(n')<B\;\mbox{all $n'<n$})&\le\beta.\label{SPRT-typeII}
\end{align} The most simple and widely-used method for finding $A$ and $B$ is to use the closed-form \emph{Wald-approximations} $A=A_W(\alpha,\beta)$ and $B=B_W(\alpha,\beta)$, where 
\begin{equation}\label{myAB}
A_W(a,b) =\log\left(\frac{b}{1-a}\right)+\rho,\quad B_W(a,b)=\log\left(\frac{1-b}{a}\right)-\rho
\end{equation} for $a,b\in(0,1)$ such that $a+b\le 1$ and a fixed quantity $\rho\ge 0$. See \citet[][Section~3.3.1]{Hoel71} for a derivation of the $\rho=0$ case and, based on Brownian motion approximations, \citet[][p.~50 and Chapter~X]{Siegmund85} derives the value $\rho=.583$ which has been used to improve the approximation for continuous random variables. Although, in general, the inequalities in \eqref{SPRT-typeI}-\eqref{SPRT-typeII} only hold approximately  when  using the Wald approximations $A=A_W(\alpha,\beta)$ and $B=B_W(\alpha,\beta)$, \citet{Hoel71} show that the actual type I and II error probabilities can only exceed $\alpha$ or $\beta$ by a small amount in the worst case, and the difference approaches $0$ for small $\alpha$ and $\beta$, which is relevant in the present multiple testing situation where we will utilize small values of $\alpha$ and $\beta$, i.e., certain fractions of the actual prescribed error rates. 

Next we use the Wald approximations to construct closed-form critical values $A_w^{(j)}$, $B_w^{(j)}$ satisfying \eqref{typeI}-\eqref{typeII} up to Wald's approximation. Specifically, given step  values $\{\alpha_j,\beta_j\}$, we show that when using \eqref{fdrAsBs}, the left-hand-sides of \eqref{typeI}-\eqref{typeII} equal the same quantities one would get using Wald's approximations with $\alpha_j,\beta_j$ in place of $\alpha, \beta$. This generalizes results of \citet{Bartroff14c,Bartroff14b} which gave Wald approximations for the specific step values $\{\alpha_j,\beta_j\}$ proposed for the FWER- and FDR-controlling procedures, respectively, given there.

\begin{theorem}\label{thm:simple} Fix $\{\alpha_j,\beta_j\}_{j\in[J]}$ satisfying \eqref{UD.values} and $\alpha_1+\beta_1 \le 1$, and $\rho\ge 0$. Suppose that, for a certain data stream~$j$, the associated hypotheses $H^{(j)}: f^{(j)}=h^{(j)}$ and $G^{(j)}: f^{(j)}=g^{(j)}$ are simple. For $a,b\in(0,1)$ such that $a+b\le 1$ let $\alpha_W^{(j)}(a,b)$ and $\beta_W^{(j)}(a,b)$ be the values of the probabilities on the left-hand sides of \eqref{SPRT-typeI} and \eqref{SPRT-typeII}, respectively, when $\Lambda^{(j)}(n)$ is given by \eqref{simpleLLR} and $A=A_W(a,b)$ and $B=B_W(a,b)$ are given by the Wald approximations \eqref{myAB}. For $w\in[J]$ let 
\begin{equation*}
\wtilde{\alpha}_w=\frac{\alpha_1 (1-\beta_w)}{1-\beta_1}\qmq{and}  \wtilde{\beta}_w=\frac{\beta_1 (1-\alpha_w)}{1-\alpha_1},
\end{equation*}
and let $p_w^{(j)}$ and $q_w^{(j)}$ denote the left-hand-sides of \eqref{typeI} and \eqref{typeII}, respectively, with $A_w^{(j)}$, $B_w^{(j)}$ given by
\begin{equation}\label{fdrAsBs}
A_w^{(j)}=\log\left(\frac{\beta_w(1-\beta_1)}{1-\beta_1-\alpha_1(1-\beta_w)}\right) +\rho,\quad B_w^{(j)}=\log\left(\frac{1-\alpha_1-\beta_1(1-\alpha_w)}{\alpha_w(1-\alpha_1)}\right)-\rho.
\end{equation}
Then, for all $w\in[J]$,
\begin{gather}
\alpha_w+\wtilde{\beta}_w \le 1,\quad \wtilde{\alpha}_w + \beta_w \le 1,\label{sum<=1}\\
p_w^{(j)}=\alpha_W^{(j)}(\alpha_w,\wtilde{\beta}_w),\qmq{and} q_w^{(j)}=\beta_W^{(j)}(\wtilde{\alpha}_w, \beta_w)\label{fdraH=aS}
\end{gather}
and therefore \eqref{typeI}-\eqref{typeII} hold, up to Wald's approximation, when using the critical values~\eqref{fdrAsBs}.
\end{theorem}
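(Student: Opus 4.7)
The plan is to recognize that the critical values defined in \eqref{fdrAsBs} are precisely the Wald boundaries for an SPRT targeting modified error probabilities, and then the asserted identities follow directly from the definition of $\alpha_W^{(j)}, \beta_W^{(j)}$.

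First I would dispose of \eqref{sum<=1}. Rearranging $\alpha_w+\wtilde{\beta}_w\le 1$ after clearing the denominator $1-\alpha_1$ reduces (with a bit of cancellation) to $\alpha_w(1-\alpha_1-\beta_1)\le 1-\alpha_1-\beta_1$, which holds since $\alpha_w\le 1$ and $1-\alpha_1-\beta_1\ge 0$ by hypothesis. The symmetric argument handles $\wtilde{\alpha}_w+\beta_w\le 1$. These inequalities are precisely what is needed to make $\alpha_W^{(j)}(\alpha_w,\wtilde{\beta}_w)$ and $\beta_W^{(j)}(\wtilde{\alpha}_w,\beta_w)$ well-defined (they require the argument pair to sum to at most $1$).

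Next I would verify the key identities
\begin{equation*}
A_W(\alpha_w,\wtilde{\beta}_w)=A_1^{(j)},\qquad B_W(\alpha_w,\wtilde{\beta}_w)=B_w^{(j)},
\end{equation*}
and symmetrically $A_W(\wtilde{\alpha}_w,\beta_w)=A_w^{(j)}$, $B_W(\wtilde{\alpha}_w,\beta_w)=B_1^{(j)}$, by direct substitution into \eqref{myAB}. For the first, $A_W(\alpha_w,\wtilde{\beta}_w)=\log(\wtilde{\beta}_w/(1-\alpha_w))+\rho$; substituting $\wtilde{\beta}_w=\beta_1(1-\alpha_w)/(1-\alpha_1)$, the factor $1-\alpha_w$ cancels and one obtains $\log(\beta_1/(1-\alpha_1))+\rho$, which matches $A_1^{(j)}$ obtained by setting $w=1$ in \eqref{fdrAsBs}. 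Similarly, $B_W(\alpha_w,\wtilde{\beta}_w)=\log((1-\wtilde{\beta}_w)/\alpha_w)-\rho$ and $1-\wtilde{\beta}_w=[1-\alpha_1-\beta_1(1-\alpha_w)]/(1-\alpha_1)$, which plugged in yields exactly the expression for $B_w^{(j)}$ in \eqref{fdrAsBs}. The analogous pair of identities is obtained by the same arithmetic with the roles of $\alpha$ and $\beta$ swapped.

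Finally I would assemble \eqref{fdraH=aS}. The quantity $p_w^{(j)}$ is by definition the probability under $h^{(j)}$ that the log-likelihood ratio process $\Lambda^{(j)}(n)$ of \eqref{simpleLLR} crosses $B_w^{(j)}$ before crossing $A_1^{(j)}$; by the identities just established, this is the same event as the SPRT with Wald boundaries for the pair $(\alpha_w,\wtilde{\beta}_w)$ rejecting $H^{(j)}$, whose probability is $\alpha_W^{(j)}(\alpha_w,\wtilde{\beta}_w)$ by definition. The analogous recognition gives $q_w^{(j)}=\beta_W^{(j)}(\wtilde{\alpha}_w,\beta_w)$. Since the Wald approximations satisfy $\alpha_W^{(j)}(a,b)\le a$ and $\beta_W^{(j)}(a,b)\le b$ up to the standard approximation error (see \citet{Hoel71}), taking $(a,b)=(\alpha_w,\wtilde{\beta}_w)$ and $(\wtilde{\alpha}_w,\beta_w)$ respectively yields \eqref{typeI}-\eqref{typeII} to the same order of approximation. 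There is no genuine obstacle here: the entire argument is a careful algebraic substitution, and the only thing to watch is keeping straight which of the pair $(\alpha_w,\beta_w)$ gets ``tilded'' in each of the two bounds.
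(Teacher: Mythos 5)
Your proposal is correct and follows essentially the same route as the paper's proof: verify \eqref{sum<=1} by elementary algebra using $\alpha_1+\beta_1\le 1$, check that $A_1^{(j)}=A_W(\alpha_w,\wtilde{\beta}_w)$ and $B_w^{(j)}=B_W(\alpha_w,\wtilde{\beta}_w)$ (and the symmetric pair) by direct substitution, and then identify the event defining $p_w^{(j)}$ with the SPRT rejection event to conclude $p_w^{(j)}=\alpha_W^{(j)}(\alpha_w,\wtilde{\beta}_w)$. The only difference is cosmetic: you carry out the substitution algebra that the paper dismisses as ``simple algebra to check.''
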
 

We remark that the $\rho=0$ case of Theorem~\ref{thm:simple} holds without the independence assumption on $X_1^{(j)}, X_2^{(j)},\ldots$ made in this section, since this original form of Wald's approximations does not require this. 

\subsection{Group Sequential Testing}\label{sec:group.seq}

As mentioned above, the setup considered here is general enough to admit group sequential sampling as a special case and the popular methods for choosing group sequential stopping boundaries -- such as Pocock's \citeyearpar{Pocock77} test and O'Brien and Fleming's \citeyearpar{OBrien79} test, which we consider as examples here -- can be utilized.  See also \citet[][Chapters~2.4 and 2.5]{Jennison00} for these tests, whose setup  we follow.  Both Pocock's and O'Brien and Fleming's tests, in their original forms, utilize a fixed maximum number~$g$ of groups and only allow early stopping to reject the corresponding null hypothesis; if the null is not rejected at or before the $g$th group then it is accepted. This is precisely the form of the rejective procedures defined in Section~\ref{sec:rej}, which we now consider; the last paragraph in this section discusses group sequential tests that allow early rejection or acceptance of the null hypothesis. To utilize Pocock's test of the null hypothesis~$H^{(j)}: \theta^{(j)}=0$ about the average difference~$\theta^{(j)}$ in treatment effects with at most $g$ groups all of size~$m$ (although groups of unequal sizes can be handled with only minor notational burden), let $X_n^{(j)}=(D_{(n-1)m+1}^{(j)},D_{(n-1)m+2}^{(j)},\ldots,D_{nm}^{(j)})$, $n\in[g]$, be the vector of observed differences~$D_i^{(j)}$ in the $n$th group. Pocock's test statistic can be written 
\begin{equation}\label{L.Pocock}
\Lambda^{(j)}(n)=\abs{\frac{1}{\sqrt{nm\sigma^2}}\sum_{i=1}^{nm}D_i^{(j)}}\qmq{for}n\in[g],
\end{equation} where $\sigma^2$ is the known variance of the $D_i^{(j)}$. Given $\alpha\in(0,1)$, the $\alpha$-level version of the test stops after group $n\in[g]$ and rejects $H^{(j)}$ if $\Lambda^{(j)}(n)\ge C_P(\alpha)$, accepting $H^{(j)}$ if no rejection has occurred by the $g$th group.  Here $C_P(\alpha)$ is a constant (the subscript $P$ for Pocock) calculated to make the type~I error probability of this test no greater than $\alpha$, i.e.,
\begin{equation}\label{Pocock.typeI} 
P_{\theta^{(j)}=0}\left( \Lambda^{(j)}(n)\ge C_P(\alpha)\;\mbox{for some $n\in[g]$}\right)\le \alpha\qmq{for any}\alpha\in(0,1).
\end{equation} Calculation of $C_P(\alpha)$ is well-understood and included in many standard software packages; see \citet[][Chapter~19]{Jennison00}. 

To utilize the Pocock test as the $j$th component test in a rejective sequential stepup or stepdown procedure defined in Section~\ref{sec:rej}, let $\overline{N}=g$, $\Lambda^{(j)}(n)$ be as in \eqref{L.Pocock} for $n\in[g]$, and $B_w^{(j)}=C_P(\alpha_w)$ for $w\in[J]$ where $\alpha_w$ are the given step values. By these definitions and those of the rejective procedures we see that $H^{(j)}$ will be rejected at the first stage $n\in[\overline{N}]=[g]$ where $\Lambda^{(j)}(n)$ crosses a certain boundary~$B_w^{(j)}$, and accepted otherwise. All that remains to check that Theorems~\ref{thm:D.rej} and \ref{thm:U.rej} are in force is to verify that \eqref{typeI.rej} holds, whose left-hand side is equal to $$P_{\theta^{(j)}=0}\left( \Lambda^{(j)}(n)\ge C_P(\alpha_w)\;\mbox{for some $n\in[g]$}\right)$$ which, by \eqref{Pocock.typeI}, is no greater than $\alpha_w$.

O'Brien and Fleming's test can be applied similarly but with the slightly different test statistic 
\begin{equation}\label{L.OF}
\Lambda^{(j)}(n)=\abs{\frac{1}{\sqrt{gm\sigma^2}}\sum_{i=1}^{nm}D_i^{(j)}}\qmq{for}n\in[g],
\end{equation} which differs from \eqref{L.Pocock} by a factor of $\sqrt{g/n}$. This test stops to reject $H^{(j)}$ at the earliest stage~$n\in[g]$ such that $\Lambda^{(j)}(n)\ge C_{OF}(\alpha)$, constants satisfying
\begin{equation}\label{OF.typeI} 
P_{\theta^{(j)}=0}\left( \Lambda^{(j)}(n)\ge C_{OF}(\alpha)\;\mbox{for some $n\in[g]$}\right)\le \alpha\qmq{for any}\alpha\in(0,1).
\end{equation} 
Using this test as a component test in a rejective procedure is similar to that for Pocock's test but taking $B_w^{(j)}=C_{OF}(\alpha_w)$. As above, \eqref{OF.typeI} guarantees that \eqref{typeI.rej} holds, and hence Theorems~\ref{thm:D.rej} and \ref{thm:U.rej} are in force.

Neither Pocock's nor O'Brien and Fleming's tests stop early to accept the null hypothesis, but other popular group sequential tests do allow this behavior, such as power family tests \citep[see][Chapter~5]{Jennison00}. These tests can be used as component tests in the sequential stepup or stepdown procedures in Section~\ref{sec:I&II} in a similar way as the discussion above for rejective procedures with the minor notational burden of including a maximum sample size $\overline{N}$, equal to the maximum number of groups in this group sequential setting. Of course the choice of $\overline{N}$, as well as the group size (e.g., $m$ in the discussion above) may affect the ability to achieve the needed type~I and II error probabilities~\eqref{typeI} and \eqref{typeII}, but this issue is not unique to multiple testing considerations and must be considered in group sequential testing of a single null hypothesis as well.

\section{Numerical Comparisons}\label{sec:sim}
\subsection{Introduction and Setup}\label{sec:sim.setup}
Although a comprehensive comparison of the sequential stepup and stepdown procedures proposed above is beyond the scope of this article, in this section we give a comparison in the particular setting of inference about the means of strongly positively correlated Gaussian data streams; \citet{Muller07} note that this setting is still one of  the most widely used in applications involving multiple testing.

\begin{figure}[p]
\begin{center}
\scalebox{1}{\includegraphics{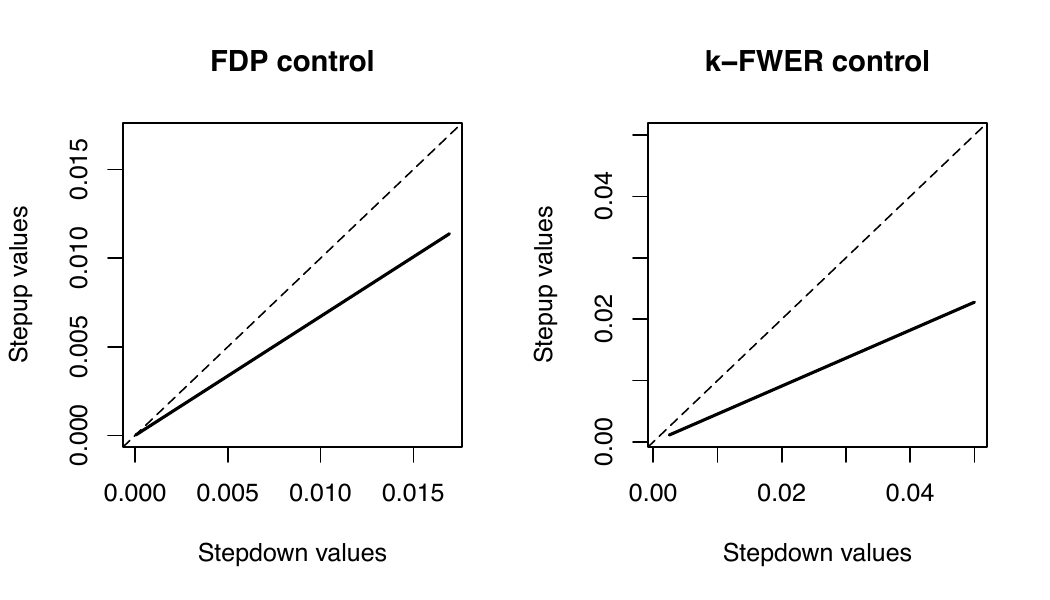}}
\caption{Stepdown versus stepup values (solid lines) for testing $J=500$ null hypotheses with $\alpha=.05$. In the left panel, the stepdown and stepup values $\alpha_j$ are given by \eqref{DP.gen.cons} and \eqref{UP.gen.cons}, respectively, both with $\delta_j$ given by \eqref{dn.Holm.DP} and  $\gamma_1=0.1$.  In the right panel, the stepdown and stepup values $\alpha_j$ are given by \eqref{cons.Dk} and \eqref{cons.Uk}, respectively, with $k_1=25$ and $\delta_j$ given by \eqref{dn.Uk} in the latter case. The identity line is dashed.}
\label{fig:step.vals}
\end{center}
\end{figure}

As mentioned in Section~\ref{sec:stats}, if a fixed sample stepup procedure  uses the same (or larger) step values $\{\alpha_j\}$ as a stepdown procedure, then the stepup procedure is preferred because  it will reject more null hypotheses and hence be more powerful while not exceeding the prescribed multiple testing error bound~$\alpha$.  The same statement holds about the rejective sequential procedures in  Section~\ref{sec:rej}, and an analogous statement holds about the sequential procedures in Section~\ref{sec:I&II} which control both type~I and II generalized error rates and their step values $\{\alpha_j, \beta_j\}$ in which case ``more powerful'' means less conservative type~I and II error control below the prescribed values~$\alpha$ and $\beta$.  However there is no such simple ``dominating'' relationship between the values of the stepup and stepdown procedures proposed above.  For example, Figure~\ref{fig:step.vals} contains plots of the stepdown versus stepup values $\alpha_j$ defined in Sections~\ref{sec:D} and \ref{sec:U}, respectively, for $\alpha=.05$, $J=500$ null hypotheses, and $\gamma_1=.1$ for FDP control (left panel) and $k_1=25$ for \fweI\, control (right panel). In both panels the solid line is below the dotted identity line indicating that each stepdown value exceeds its corresponding stepup value.

Thus, to investigate the efficiency and overall performance of the proposed sequential stepdown and stepup  procedures, simulation studies were performed to estimate their operating characteristics. For this, $J$ streams of Gaussian data were repeatedly simulated in order to consider a battery of tests of the form
\begin{equation}\label{z.test.hyp}
H^{(j)}: \theta^{(j)}\le 0\qmq{vs.} G^{(j)}: \theta^{(j)}\ge 1
\end{equation}
about the mean~$\theta^{(j)}$ of the $j$th data stream.  The proposed procedures, with their strict error control regardless of dependence, will probably be most useful in settings with strongly positively correlated data streams. \label{pos.refs}For example, about multiple testing problems which arise in genetic association studies by comparing many possible statistical models for genetic data, \citet[][p.~24]{Zheng12} remark that typically ``all genetic models under consideration are positively correlated.'' And in randomized multi-arm clinical trials, \citet[][p.~4369]{Freidlin08} note that ``individual comparisons are positively correlated due to the use of the same control arm.'' To create such a setting of strongly positively correlated data streams, the collection $(X_n^{(1)},\ldots, X_n^{(J)})$ of the $n$th observations
from the $J$ data streams were simulated as a $J$-dimensional multivariate normal distribution with mean~$\theta=(\theta^{(1)},\ldots,\theta^{(J)})$ and covariance matrix
\begin{equation}\label{covar}
\sigma^2\left[
\begin{array}{cccc}
 1 &.95&\cdots&.95   \\
 .95&1&\cdots&.95\\
 \vdots&\vdots&\ddots&\vdots\\
 .95&.95&\cdots&1
\end{array}
\right].
\end{equation} Constant correlation models such as this have recently been popular in the study of genetic correlation structure \citep{Lee11,Hardin13}, and for us \eqref{covar} provides a convenient way of generating a large number of data streams with strong positive correlation.  In the studies that follow we have chosen $\sigma=2$ to give tests of reasonable length. We note that, while the collection $(X_n^{(1)},\ldots, X_n^{(J)})$ of the $n$th observations  was generated using the above distribution, successive observations $(X_n^{(1)},\ldots, X_n^{(J)})$, $(X_{n+1}^{(1)},\ldots, X_{n+1}^{(J)})$ were generated independently. The test statistics \eqref{simpleLLR} were used with $\theta^{(j)}=0$ vs.\ $\theta^{(j)}=1$ as surrogate hypotheses, reducing to
$$\Lambda^{(j)}(n)=\frac{1}{\sigma^2}\left(\sum_{i=1}^n X_i^{(j)}-\frac{n}{2}\right)$$ in this case, and the critical values \eqref{fdrAsBs} were used with $\rho=.583$ and $\{\alpha_j,\beta_j\}$ as described below. The results of simulation studies in this setting are reported in Section~\ref{sec:P.table} for $\gfdp$ and $\gfnp$ control, and Section~\ref{sec:k.table} for $\fweI$ and $\fweII$ control. Finally, in Section~\ref{sec:t}, the assumption of known variance is dropped and Student's $t$-tests of composite hypotheses are considered.

\subsection{Study of Procedures Controlling $\gfdp$ and $\gfnp$}
\label{sec:P.table} Table~\ref{table:fdp} contains some operating characteristics under various settings of the sequential stepdown and stepup procedures, denoted Seq$_{D}$ and Seq$_{U}$, defined in Sections~\ref{sec:DP.procs} and \ref{sec:UP.procs} using step values~\eqref{DP.gen.cons} and \eqref{UP.gen.cons} (both with $\delta_j$ given by \eqref{dn.Holm.DP}), respectively, and  which control $\gfdp\le\alpha=.05$ and $\gfnp\le\beta=.2$. The operating characteristics are the \textit{expected streamwise average sample size} $E_\theta N$ which is the average sample size over the $J$ streams (i.e., $N=\sum_{j=1}^J N_j/J$ if $N_j$ denotes the sample size of the $j$th stream), its standard error SE, and the achieved generalized error rates $\gfdp$ and $\gfnp$. Each operating characteristic estimate is the result of 10,000 Monte Carlo simulated ensembles of $J$ data streams. The parameter values $\gamma_1=\gamma_2=.1$ were used and three states of nature, in terms of the number of true null hypotheses~$H^{(j)}$, are considered for both of the $J=500$ and $J=1000$ scenarios,  with the true $H^{(j)}$ are simulated using $\theta^{(j)}=0$ and the false $H^{(j)}$ with $\theta^{(j)}=1$, representing the ``worst case'' with respect to distinguishability of the null and alternative hypotheses. 

In order to provide a point of reference for these sequential procedures, the performance of comparable fixed sample size stepdown and stepup procedures, denoted by Fix$_{D}$ and Fix$_{U}$, were also estimated. These are the procedures defined in Section~\ref{sec:stats} which use the same step values $\alpha_j$ as Seq$_{D}$ and Seq$_{U}$, respectively. Since these values $\alpha_j$ determine the type~I generalized error rate $\gfdp$, in order to obtain  procedures comparable to the sequential ones, the fixed sample sizes for Fix$_{D}$ and Fix$_{U}$ were chosen as the values yielding the type~II generalized error rate $\gfnp$ most closely matching that of the sequential procedure with the smallest $E_\theta N$ (i.e., the more efficient of Seq$_{D}$ and Seq$_{U}$), whose row is shaded in each scenario in the table.  Regarding this method of choosing the fixed sample size, one might argue that a different comparison would be better since the procedures are quite conservative in their error control, however the fixed sample size procedures are also very conservative, in fact \emph{more} conservative than the sequential procedures since the error probabilities tend to decrease as sample size increases; in this sense the comparison is actually conservative. Because the sample sizes of Fix$_{D}$ and Fix$_{U}$ are fixed, their SE is left blank. The final column of the table shows that savings in $E_\theta N$ of each sequential procedure relative to its fixed sample counterpart.

\begin{table}[h]
\caption{Expected streamwise average sample size $E_\theta N$, its standard error SE, achieved error rates $\gfdp$ and $\gfnp$, and savings in $E_\theta N$ of the sequential (denoted Seq$_{D}$ and Seq$_{U}$) and fixed sample size (denoted Fix$_{D}$ and Fix$_{U}$) procedures described in Section~\ref{sec:P.table} for testing $J$ null hypotheses about the means of Gaussian data streams. The parameter values are $\alpha=.05$, $\beta=.2$, and $\gamma_1=\gamma_2=.1$ and each estimate is the result of 10,000 simulated ensembles of $J$ data streams. The shaded row in each scenario is the procedure with the smallest $E_\theta N$.}
\label{table:fdp}
\begin{center}
\begin{tabular}{c|cccccc}
\# True $H^{(j)}$ & Procedure  & $E_\theta N$     & SE   & $\gfdp(\theta)$ & $\gfnp(\theta)$ & $E_\theta N$ Savings \\\hline\hline
\multicolumn{7}{c}{$J=500$}\\
\multirow{4}{*}{100}   & Seq$_{D}$ & 63.63  & 0.60 & 0.007               & 0.015               & 53\%    \\
      & \cellcolor{lightgray}Seq$_{U}$ & \cellcolor{lightgray}54.17  & \cellcolor{lightgray}0.67 &\cellcolor{lightgray}0.008               & \cellcolor{lightgray}0.012               & \cellcolor{lightgray}55\%    \\
      & Fix$_{D}$ & 136 &      & 0.002               & 0.012               &         \\
      & Fix$_{U}$ & 120 &      & 0.001               & 0.012               &         \\
     & Fix$_{D}'$ & 129 &      & 0.007               & 0.015               &         \\
      & Fix$_{U}'$ & 110 &      & 0.008               & 0.012               &         \\\hline
\multirow{4}{*}{250}   & Seq$_{D}$ & 60.66  & 0.40 & 0.004               & 0.026               & 55\%    \\
      & \cellcolor{lightgray}Seq$_{U}$ & \cellcolor{lightgray}53.39  & \cellcolor{lightgray}0.40 & \cellcolor{lightgray}0.003               & \cellcolor{lightgray}0.016               & \cellcolor{lightgray}58\%    \\
      & Fix$_{D}$ & 135 &      & 0.001               & 0.015               &         \\
      & Fix$_{U}$ & 128 &      & 0.001               & 0.016               &         \\\hline
\multirow{4}{*}{400}   & Seq$_{D}$ & 56.98  & 0.58 & 0.006               & 0.039               & 57\%    \\
      & \cellcolor{lightgray}Seq$_{U}$ & \cellcolor{lightgray}45.97  &\cellcolor{lightgray}0.57 & \cellcolor{lightgray}0.007               & \cellcolor{lightgray}0.022               & \cellcolor{lightgray}65\%    \\
      & Fix$_{D}$ & 134 &      & 0.001               & 0.022               &         \\
      & Fix$_{U}$ & 131 &      & 0.001               & 0.022               & \\\hline\hline       
\multicolumn{7}{c}{$J=1000$}\\
\multirow{4}{*}{250} & Seq$_{D}$ & 67.26 & 0.52 & 0.006 & 0.008 & 54\% \\
    & \cellcolor{lightgray}Seq$_{U}$ & \cellcolor{lightgray}54.93 & \cellcolor{lightgray}0.58 & \cellcolor{lightgray}0.003 & \cellcolor{lightgray}0.009 & \cellcolor{lightgray}56\% \\
    & Fix$_{D}$ & 147      &      & 0.002 & 0.009 &      \\
    & Fix$_{U}$ & 125      &      & 0.001 & 0.009 &      \\\hline
\multirow{4}{*}{500} & Seq$_{D}$ & 65.54 & 0.39 & 0.009 & 0.021 & 50\% \\
    & \cellcolor{lightgray}Seq$_{U}$ & \cellcolor{lightgray}54.06 &\cellcolor{lightgray}0.41 &\cellcolor{lightgray}0.002 &\cellcolor{lightgray}0.027 &\cellcolor{lightgray}54\% \\
    & Fix$_{D}$ & 130      &      & 0.001 & 0.026 &      \\
    & Fix$_{U}$ & 118      &      & 0.001 & 0.026 &      \\\hline
\multirow{4}{*}{750} & Seq$_{D}$ & 63.16 & 0.55 & 0.001 & 0.026 & 54\% \\
    & \cellcolor{lightgray}Seq$_{U}$ &\cellcolor{lightgray}49.72 &\cellcolor{lightgray}0.53 &\cellcolor{lightgray}0.002 &\cellcolor{lightgray}0.023 &\cellcolor{lightgray}61\% \\
    & Fix$_{D}$ & 136      &      & 0.001 & 0.023 &      \\
    & Fix$_{U}$ & 129      &      & 0.001 & 0.023 &     
\end{tabular}
\end{center}
\end{table}

The sequential procedures in Table~\ref{table:fdp} show a dramatic savings in average sample size relative to the fixed sample size procedures, at least 50\% in all cases and as high as 65\%. The sequential procedures also have less conservative error control than their fixed sample size counterparts, most evident in the type~I generalized error rate $\gfdp$ which was not used for ``matching'' the fixed sample procedures as the type~II version was. This less conservative error control is perhaps due to the sequential procedures' smaller average sample size.  Nonetheless, all the procedures still have quite conservative error control relative to the prescribed values of $\alpha=.05$ and $\beta=.2$ even on this highly positively correlated data. Another notable feature of the results in Table~\ref{table:fdp} is that the sequential stepup procedures are slightly but consistently more efficient than the stepdown procedures in each scenario, in terms of minimizing $E_\theta N$. In the next section we will see that the reverse is true in a similar study of procedures controlling \fweI\, and \fweII.

Because of the highly conservative error control of all the procedures in Table~\ref{table:fdp}, but especially the fixed sample size procedures, another type of comparison that may shed light on how much of the efficiency gained by the sequential procedures is due the sequential sampling itself rather than the differing achieved error rates, included in the first scenario of Table~\ref{table:fdp} are two more fixed sample size procedures (denoted Fix$_{D}'$ and Fix$_{U}'$) which match \emph{both} error rates $\gfdp$ and $\gfnp$ of Seq$_{D}$ and Seq$_{U}$, respectively.  These were found by exhaustively searching over values of the fixed streamwise sample size~$N$ and a grid of values for the nominal $\gfdp$ rate $\alpha$ for Fix$_{D}'$ and Fix$_{U}'$. The procedure Fix$_{D}'$ uses $\alpha=.092$ and $N=129$ to match the error rates $\gfdp=.007$ and $\gfnp=.015$ of Seq$_{D}$, and Fix$_{U}'$ uses $\alpha=.112$ and $N=110$ to match $\gfdp=.008$ and $\gfnp=.012$ of Seq$_{U}$. The increase in nominal $\alpha$ required for this matching is roughly a factor of 2, and the decrease in sample size is modest, leaving the sample sizes of Fix$_{D}'$ and Fix$_{U}'$ still substantially larger than their sequential counterparts even though they do not have proven error control at the $\alpha=.05$ level. This suggests the efficiency gains of the sequential procedures relative to the fixed sample size procedures are due more to the sequential sampling  than their less conservative error control.

\subsection{Study of Procedures Controlling $\fweI$ and $\fweII$}
\label{sec:k.table}

Table~\ref{table:fwe} contains the results of a  study similar to Table~\ref{table:fdp} but for procedures controlling \fweI\, and \fweII. In Table~\ref{table:fwe}, Seq$_{D}$ and Seq$_{U}$ denote the stepdown and stepup procedures defined in Sections~\ref{sec:DP.procs} and \ref{sec:UP.procs} using step values~\eqref{cons.Dk} and \eqref{cons.Uk}, respectively, with $\delta_j$ given by \eqref{dn.Uk} for the latter. The parameters $k_1=k_2=25$ were used for the $J=500$ scenario and $k_1=k_2=50$ for the $J=1000$ scenario, and the same prescribed error bounds $\alpha=.05$, $\beta=.2$ were used. The operating characteristics and simulation settings are otherwise the same as the previous section. As there, the stepdown and stepup fixed sample size procedures Fix$_{D}$ and Fix$_{U}$ are those defined in Section~\ref{sec:stats} which use the same step values $\alpha_j$ as Seq$_{D}$ and Seq$_{U}$, respectively, and the fixed sample sizes of these procedures was chosen to match their type~II generalized error rate \fweII\, as closely as possible to the sequential procedure with the smallest $E_\theta N$, whose row is shaded in the table in each scenario.

\begin{table}[h]
\caption{Expected streamwise average sample size $E_\theta N$, its standard error SE, achieved error rates \fweI\, and \fweII, and the savings in $E_\theta N$  of the sequential (denoted Seq$_{D}$ and Seq$_{U}$) and fixed sample size (denoted Fix$_{D}$ and Fix$_{U}$) procedures described in Section~\ref{sec:k.table} for testing $J$ null hypotheses about the means of Gaussian data streams. The parameter values are $\alpha=.05$ and $\beta=.2$ and each estimate is the result of 10,000 simulated ensembles of $J$ data streams. The shaded row in each scenario is the procedure with the smallest $E_\theta N$.}
\label{table:fwe}
\begin{center}
\begin{tabular}{c|cccccc}
\# True $H^{(j)}$ & Procedure  & $E_\theta N$    & SE   & \fweI$(\theta)$ & \fweII$(\theta)$ & $E_\theta N$ Savings \\\hline\hline
\multicolumn{7}{c}{$J=500$, $k_1=k_2=25$}\\
\multirow{4}{*}{100}   & \cellcolor{lightgray}Seq$_{D}$ & \cellcolor{lightgray}38.39 &\cellcolor{lightgray}0.48 &\cellcolor{lightgray}0.020     & \cellcolor{lightgray}0.039     &\cellcolor{lightgray}49\%    \\
      & Seq$_{U}$ & 44.91 & 0.59 & 0.009    & 0.034     & 54\%    \\
      & Fix$_{D}$ & 75    &      & 0.023    & 0.039     &         \\
      & Fix$_{U}$ & 97    &      & 0.002    & 0.040      &         \\
      & Fix$_{D}'$ & 77    &      & 0.020    & 0.039     &         \\
      & Fix$_{U}'$ & 95    &      & 0.009    & 0.034      &         \\\hline
\multirow{4}{*}{250}   & \cellcolor{lightgray}Seq$_{D}$ & \cellcolor{lightgray}36.81 &\cellcolor{lightgray}0.32 & \cellcolor{lightgray}0.017    &\cellcolor{lightgray}0.047     &\cellcolor{lightgray}57\%    \\
      & Seq$_{U}$ & 43.32 & 0.38 & 0.011    & 0.041     & 55\%    \\
      & Fix$_{D}$ & 86    &      & 0.005    & 0.047     &         \\
      & Fix$_{U}$ & 97    &      & 0.001    & 0.046     &         \\\hline
\multirow{4}{*}{400}   & \cellcolor{lightgray}Seq$_{D}$ &\cellcolor{lightgray}32.12 & \cellcolor{lightgray}0.46 & \cellcolor{lightgray}0.007    & \cellcolor{lightgray}0.067     & \cellcolor{lightgray}60\%    \\
      & Seq$_{U}$ & 38.17 & 0.53 & 0.009    & 0.065     & 57\%    \\
      & Fix$_{D}$ & 80    &      & 0.030    & 0.066     &         \\
      & Fix$_{U}$ & 89    &      & 0.001    & 0.066     &\\\hline\hline
      \multicolumn{7}{c}{$J=1000$, $k_1=k_2=50$}\\
\multirow{4}{*}{250}   & \cellcolor{lightgray}Seq$_{D}$ & \cellcolor{lightgray}37.45 &\cellcolor{lightgray}0.42 &\cellcolor{lightgray}0.015    &\cellcolor{lightgray}0.033     &\cellcolor{lightgray}58\%    \\
      & Seq$_{U}$ & 44.07 & 0.51 & 0.005    & 0.042     & 56\%    \\
      & Fix$_{D}$ & 89       &      & 0.009    & 0.034     &         \\
      & Fix$_{U}$ & 100      &      & 0.002    & 0.034     &         \\\hline
\multirow{4}{*}{500}   & \cellcolor{lightgray}Seq$_{D}$ & \cellcolor{lightgray}36.73 & \cellcolor{lightgray}0.31 & \cellcolor{lightgray}0.012    & \cellcolor{lightgray}0.050      & \cellcolor{lightgray}57\%    \\
      & Seq$_{U}$ & 42.46 & 0.38 & 0.008    & 0.044     & 56\%    \\
      & Fix$_{D}$ & 86       &      & 0.005    & 0.051     &         \\
      & Fix$_{U}$ & 96       &      & 0.001    & 0.049     &         \\\hline
\multirow{4}{*}{750}   & \cellcolor{lightgray}Seq$_{D}$ &\cellcolor{lightgray}33.27 & \cellcolor{lightgray}0.41 & \cellcolor{lightgray}0.012    & \cellcolor{lightgray}0.065     & \cellcolor{lightgray}59\%    \\
      & Seq$_{U}$ &39.93 &0.46 & 0.006    & 0.040      & 57\%    \\
      & Fix$_{D}$ & 82       &      & 0.003    & 0.063     &         \\
      & Fix$_{U}$ & 92       &      & 0.001    & 0.064     &        
\end{tabular}
\end{center}
\end{table}

Similar to the results in Table~\ref{table:fdp}, the sequential procedures in Table~\ref{table:fwe} show a substantial savings of roughly 50\% to 60\% in average sample size relative to the fixed sample size procedures, and less conservative error control than their fixed sample size counterparts, most evident in the type~I generalized error rate \fweI\, which was not used for ``matching'' the fixed sample procedures as the type~II version was. All the procedures have quite conservative error control relative to the prescribed values of $\alpha=.05$ and $\beta=.2$. Unlike Table~\ref{table:fdp}, the sequential stepdown procedures in Table~\ref{table:fwe} were more efficient than the stepup procedures in terms of smaller $E_\theta N$. 

Similar to Table~\ref{table:fdp}, the first scenario in Table~\ref{table:fwe} also includes fixed sample size procedures Fix$_{D}'$ and Fix$_{U}'$ whose values of streamwise sample size~$N$ and nominal $\fweI$ bound $\alpha$ were searched over to find values giving attained $\fweI$ and $\fweII$ equal to those of the sequential procedures Seq$_{D}$ and Seq$_{U}$, respectively. The procedure Fix$_{D}'$ uses $\alpha=.048$ and $N=77$ to match the error rates $\fweI=.020$ and $\fweII=.039$ of Seq$_{D}$, and Fix$_{U}'$ uses $\alpha=.080$ and $N=95$ to match $\fweI=.009$ and $\fweII=.034$ of Seq$_{U}$.  Whereas Fix$_{U}'$ uses a slightly smaller sample size (and larger $\alpha$) than Fix$_{U}$ because the latter is more conservative than Seq$_{U}$ in terms of error rates, Fix$_{D}'$ uses a slightly larger sample size (and smaller $\alpha$) than Fix$_{D}$ because the latter is actually less conservative than Seq$_{D}$. In any case, the change in sample size of these modified fixed sample procedures is slight and the fixed sample sizes remain substantially larger than their sequential counterparts, indicating that the increased efficiency is due to the sequential sampling rather than differing achieved error rates, as in Table~\ref{table:fdp}.

\subsection{Composite Hypotheses: Student's $t$-tests}\label{sec:t}

In this section we consider a setting similar to the Gaussian mean testing problem \eqref{z.test.hyp} of the previous sections but drop the assumption of known variance $\sigma^2$, making both the null and alternative in \eqref{z.test.hyp} composite hypotheses.  First we briefly describe a sequential approach to this Student's $t$-test problem and then give the results of a simulation study in a similar setting to Section~\ref{sec:k.table} for $\fweI$ and $\fweII$ control.

Suppose that the data $X_1^{(j)},X_2^{(j)},\ldots$ from a certain data stream are i.i.d.\ Gaussian data with  mean $\mu$ and variance $\sigma^2$, both unknown, and it is desired to test the null hypothesis~$\mu\le 0$ versus the alternative $\mu\ge \delta$, for some given $\delta>0$. Formally, this is a special case of the setup in Section~\ref{sec:setup} by taking $\theta^{(j)}=(\mu,\sigma)^T$, $\Theta^{(j)} =\mathbb{R}\times (0,\infty)$, $H^{(j)} = \{(\mu,\sigma)^T\in\Theta^{(j)}: \mu\le 0 \}$, and $G^{(j)} = \{(\mu,\sigma)^T\in\Theta^{(j)}: \mu\ge\delta \}$.  \citet[][Section~3.2]{Bartroff14b} suggest sequential log generalized likelihood ratio (GLR) statistics for a general class of composite hypotheses when the data is from an exponential family, including this $t$-test setting for which  the sequential log GLR statistic is \citep[see][p.~106]{Bartroff06b}
\begin{gather}\label{t.glr}
\Lambda^{(j)}(n)=\begin{cases}
+\sqrt{2n\Lambda_H(n)},&\mbox{if $\overline{X}_n^{(j)}\ge \delta/2$}\\
-\sqrt{2n\Lambda_G(n)},&\mbox{otherwise,}\end{cases}\\
\mq{where}\Lambda_H(n)=\frac{n}{2}\log\left[1+\left(\frac{\overline{X}_n^{(j)}}{\what{\sigma}_n}\right)^2\right], \quad \Lambda_G(n)=\frac{n}{2}\log\left[1+\left(\frac{\overline{X}_n^{(j)}-\delta}{\what{\sigma}_n}\right)^2\right],\nonumber
\end{gather}
and $\overline{X}_n^{(j)}$ and $\what{\sigma}_n^2$ are the usual MLE estimates of $\mu$ and $\sigma^2$, respectively, based on $X_1^{(j)},\ldots, X_n^{(j)}$. \citet[][Lemma~3.1]{Bartroff14b} also give formulas for certain upper bounds on the probabilities in \eqref{typeI}-\eqref{typeII} involving only properties of the standard normal distribution, allowing critical values $\{A_w^{(j)},B_w^{(j)}\}_{w\in[J]}$ to be computed satisfying \eqref{typeI}-\eqref{typeII} for given step values $\{\alpha_w,\beta_w\}_{w\in[J]}$ by either recursive numerical integration or Monte Carlo simulation of standard normal variates.

Table~\ref{table:t} contains the results of a  study similar to Table~\ref{table:fwe} but for sequential and fixed sample size $t$-tests. In Table~\ref{table:t}, Seq$_{D}$ and Seq$_{U}$ denote the stepdown and stepup procedures defined in Sections~\ref{sec:DP.procs} and \ref{sec:UP.procs} using step values~\eqref{cons.Dk} and \eqref{cons.Uk}, respectively, with $\delta_j$ given by \eqref{dn.Uk} for the latter. The sequential procedures use the statistics~\eqref{t.glr} with critical values computed by Monte Carlo as described in the previous paragraph. The stepdown and stepup fixed sample size procedures Fix$_{D}$ and Fix$_{U}$ are those defined in the first paragraph of Section~\ref{sec:stats} with $p$-values for sample size $n$ computed in the standard way as $1-T_{n-1}(\overline{X}_n^{(j)}\sqrt{n-1}/\what{\sigma}_n)$, where $T_{n-1}(\cdot)$ denotes the c.d.f.\ of the Student's $t$ distribution with $n-1$ degrees of freedom, and which use the same step values $\alpha_j$ as Seq$_{D}$ and Seq$_{U}$, respectively.  As in Section~\ref{sec:k.table},  the fixed sample sizes of these procedures was chosen to match their type~II generalized error rate \fweII\, as closely as possible to the sequential procedure with the smallest $E_\theta N$, whose row is shaded in the table in each scenario. To give a view of the procedures' performance under a different dependency structure for the Gaussian data streams, unlike the previous sections they were simulated not as highly correlated but rather nearly independent with correlation coefficient $.05$ replacing $.95$ in \eqref{covar}. The data was simulated using the same value $\sigma=2$ as above, but not assumed to be known.

\begin{table}[h]
\caption{Expected streamwise average sample size $E_\theta N$, its standard error SE, achieved error rates \fweI\, and \fweII, and the savings in $E_\theta N$  of the sequential (denoted Seq$_{D}$ and Seq$_{U}$) and fixed sample size (denoted Fix$_{D}$ and Fix$_{U}$) procedures described in Section~\ref{sec:t} for testing $J$ null hypotheses about the means of Gaussian data streams with unknown variances. The parameter values are $\alpha=.05$ and $\beta=.2$ and each estimate is the result of 10,000 simulated ensembles of $J$ data streams. The shaded row in each scenario is the procedure with the smallest $E_\theta N$.}
\label{table:t}
\begin{center}
\begin{tabular}{c|cccccc}
\# True $H^{(j)}$ & Procedure  & $E_\theta N$    & SE   & \fweI$(\theta)$ & \fweII$(\theta)$ & $E_\theta N$ Savings \\\hline\hline
\multicolumn{7}{c}{$J=500$, $k_1=k_2=25$}\\
\multirow{4}{*}{100}   & \cellcolor{lightgray}Seq$_{D}$ & \cellcolor{lightgray}40.22 &\cellcolor{lightgray}0.09 &\cellcolor{lightgray}0.003    & \cellcolor{lightgray}0.053     &\cellcolor{lightgray}48\%    \\
      & Seq$_{U}$ & 47.58 & 0.11 & 0.006    & 0.021     & 54\%    \\
      & Fix$_{D}$ & 77    &      & 0.003    & 0.053     &         \\
      & Fix$_{U}$ & 103    &      & 0.002    &0.053      &         \\\hline
\multirow{4}{*}{250}   & \cellcolor{lightgray}Seq$_{D}$ & \cellcolor{lightgray}38.79 &\cellcolor{lightgray}0.04 & \cellcolor{lightgray}0.018    &\cellcolor{lightgray}0.060     &\cellcolor{lightgray}56\%    \\
      & Seq$_{U}$ & 44.79 & 0.05 & 0.003    & 0.018     & 56\%    \\
      & Fix$_{D}$ & 89    &      & 0.006    & 0.059     &         \\
      & Fix$_{U}$ & 101    &      & 0.001    & 0.060     &         \\\hline
\multirow{4}{*}{400}   & \cellcolor{lightgray}Seq$_{D}$ &\cellcolor{lightgray}33.62 & \cellcolor{lightgray}0.09 & \cellcolor{lightgray}0.009    & \cellcolor{lightgray}0.059     & \cellcolor{lightgray}60\%    \\
      & Seq$_{U}$ & 37.10 & 0.11 & 0.011    & 0.057     & 60\%    \\
      & Fix$_{D}$ & 85    &      & 0.037    & 0.060     &         \\
      & Fix$_{U}$ & 93    &      & 0.002   & 0.059     &
\end{tabular}
\end{center}
\end{table}

Comparing Table~\ref{table:t} with the first half of Table~\ref{table:fwe}, one sees that the additional task of estimating the unknown variance in the $t$-test setting, plus the near-independence of the data streams, only cause a modest increase in sample size of all the procedures. The relationship between the sequential and fixed sample size procedures is otherwise remarkably similar to that in Table~\ref{table:fwe}, with the stepdown procedure Seq$_{D}$ being slightly more efficient than the stepup procedure Seq$_{U}$ for FWER control, and both being roughly 50-60\% more efficient than the fixed sample size procedures in terms of expected sample size.  Also like  Table~\ref{table:fwe}, all procedures are very conservative in terms of error control, with the sequential procedures tending to be less so (but not uniformly -- see Fix$_{D}$ in the case of 400 true $H^{(j)}$) because of their smaller average sample size. 

\section{Conclusions and Discussion}\label{sec:conc}

We have proposed general and flexible multiple testing procedures for controlling generalized error rates on sequential data whose error control holds regardless of dependence between data streams. We have given both stepdown and stepup procedures for controlling the tail probabilities of FDP and $k$-FWER, as well as their type~II versions, but in the numerical studies of their performance in Section~\ref{sec:sim} in the setting of highly positively correlated Gaussian data streams we found that, in terms of achieving smaller expected sample size,
\begin{itemize}
\item the stepup procedures performed better for controlling FDP, and 
\item the stepdown procedures performed better for controlling $k$-FWER.
\end{itemize} Although this study was limited to the specific setting of testing hypotheses about the means of Gaussian data streams with covariance matrix~\eqref{covar}, these are our working  recommendations for what to use in practice until further study is possible.

The same simulation studies also show the procedures to be highly conservative in the situation considered, in terms of having generalized error rates substantially smaller than the prescribed values $\alpha$ and $\beta$.  However, it is apparent that this is not related to the sequential nature of the procedures proposed here because the fixed sample versions also have this property and even more so. This is not surprising since the error rates tend to decrease as sample size increases and efficient sequential procedures will have smaller expected sample sizes than their fixed sample counterparts. On the other hand, the results of \citet{Lehmann05b} and \citet{Romano06b} (repeated above in Remarks~\ref{rem:opt.Dk}, \ref{rem:opt.UP}, and \ref{rem:opt.Uk}) show that the error bounds are indeed ``sharp'' and cannot be improved without more restrictive assumptions on the joint distribution of the data streams.  However, less conservative error control (or equivalently, more efficiency in terms of smaller expected sample sizes) may be possible by assumptions about (or direct modeling of) this joint distribution, which was not the focus of this paper but may be a fruitful area of future work.

As mentioned above,

The procedures proposed above, as well as those in \citet{Bartroff14c,Bartroff14b} for FDR/FNR and type~I/II FWER control, are all special cases of the generic sequential procedures in Sections~\ref{sec:gen.D} and \ref{sec:gen.U} and all use the same step values as the corresponding fixed sample size procedures: the \citet{Bartroff14c,Bartroff14b} procedures utilize the same  step values as the \citet{Benjamini95} and \citet{Holm79} procedures, respectively, and the  procedures in this paper utilize the step values of \citet{Lehmann05b} and \citet{Romano06b,Romano06}. Thus, the theme that emerges from this body of work is that, with the appropriate care, fixed sample size step values can be used with the suitable sequential test.

\section{Proofs and Auxiliary Results}\label{sec:proofs}

The proofs of the error-control properties of both the stepup and stepdown procedures utilize the following, as well as Lemma~\ref{lem:DPV} that follows. Let 
\begin{align}
W(j,b)&=\left\{\wtilde{\Lambda}^{(j)}(n)\ge b\;\mbox{for some $n$,}\;\wtilde{\Lambda}^{(j)}(n')> a_1\;\mbox{for all $n'<n$}\right\},\label{T>bM}\\
V_\theta(t,b)&=\bigcup_{j_1,\ldots,j_t\in\mT(\theta)}\bigcap_{\ell=1}^t W(j_\ell,b),\label{V.DP}\\
p^{(j)}(b)&=\sup_{\theta^{(j)}\in H^{(j)}} P_{\theta^{(j)}}(W(j,b)),\label{pj.DP}\\
M_\theta(b)&=\sum_{j\in\mT(\theta)}\bm{1}_{W(j,b)}.\label{Mb.def}
\end{align}  The union in \eqref{V.DP} is over all distinct $t$-tuples $j_1,\ldots,j_t\in\mT(\theta)$. The event $W(j,b)$ is that the standardized test statistic associated with the $j$th null hypothesis crosses $b$ from below before crossing $a_1$ from above, and $V_\theta(t,b)$ is the event that there are at least $t$ true null hypotheses for which this occurs. The function $p^{(j)}(b)$ is the ``worst-case'' (with respect to the null) probability of $W(j,b)$ happening, and the random variable $M_\theta(b)$ is the number of true null hypotheses for which this occurs. Note that any test statistic satisfying the assumptions in Section~\ref{sec:stats}, in particular \eqref{typeI.stand}, satisfies $p^{(j)}(b_w)\le  \alpha_w$ for all $j,w\in[J]$. Note also that the events $W(j,\cdot)$ are non-increasing in the sense that, for any $j\in[J]$,
\begin{equation*}
b\le b'\qmq{implies}W(j,b')\subseteq W(j,b).
\end{equation*}
It follows from this property that the events $V_\theta(t,\cdot)$ are non-increasing, and that $M_\theta(\cdot)$ is non-increasing with probability 1. It can also  be easily verified that $V_\theta(\cdot,b)$ are non-increasing. In what follows we will frequently drop the $\theta$ from $V_\theta$, $M_\theta$, and other quantities when it causes no confusion.

The following lemma is an extension to the sequential domain of \citet[][Lemma~3.1]{Lehmann05b}.

\begin{lemma}\label{lem:DPV}
In the testing situation above, fix $\theta\in\Theta$ such that $\mT(\theta)$ is nonempty, let $t=|\mT(\theta)|\in [J]$, $t_0\in[t]$, and  let $0=\zeta_0\le\ldots\le\zeta_{t_0}\le 1$ and $b_1'\ge b_2'\ge\ldots\ge b_{t_0}'$ be any sequences. With $p^{(j)}(b)$ defined by \eqref{pj.DP}, if the test statistics $\{\Lambda^{(j)}(n)\}$ satisfy $p^{(j)}(b_{s}')\le \zeta_{s}$ for all $j\in\mT(\theta), s\in[t_0]$, then with $V_\theta(t,b)$ defined by \eqref{V.DP}, we have
\begin{equation}\label{PV<.DP.lem}
P_\theta\left(\bigcup_{s=1}^{t_0} V_\theta(s,b_{s}') \right)\le t\sum_{s=1}^{t_0}\frac{\zeta_{s}-\zeta_{s-1}}{s}.
\end{equation}
\end{lemma}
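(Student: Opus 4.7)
The plan is to recast the union of events in terms of the counting variable $M_\theta$ from \eqref{Mb.def}. The key observation is that $V_\theta(s,b_s')=\{M_\theta(b_s')\ge s\}$, so
$$\bigcup_{s=1}^{t_0} V_\theta(s,b_s')=\left\{\max_{1\le s\le t_0}\frac{M_\theta(b_s')}{s}\ge 1\right\},$$
and by Markov's inequality the target reduces to bounding $E_\theta[\max_s M_\theta(b_s')/s]$.

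To bound the maximum, I would exploit monotonicity: because $b_s'$ is non-increasing in $s$ and $W(j,\cdot)$ is non-increasing in $b$, the process $s\mapsto M_\theta(b_s')$ is pathwise non-decreasing. Setting $M_\theta(b_0'):=0$ and telescoping, then using $1/s\le 1/s'$ for $s'\le s$, gives pathwise
$$\frac{M_\theta(b_s')}{s}=\sum_{s'=1}^{s}\frac{M_\theta(b_{s'}')-M_\theta(b_{s'-1}')}{s}\le\sum_{s'=1}^{t_0}\frac{M_\theta(b_{s'}')-M_\theta(b_{s'-1}')}{s'}.$$
The right side no longer depends on $s$, so passing to the max on the left and taking expectations on both sides combined with Markov yields
$$P_\theta\!\left(\bigcup_{s=1}^{t_0} V_\theta(s,b_s')\right)\le\sum_{s=1}^{t_0}\frac{E_\theta[M_\theta(b_s')]-E_\theta[M_\theta(b_{s-1}')]}{s}.$$

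Now by linearity and \eqref{Mb.def}, $E_\theta[M_\theta(b)]=\sum_{j\in\mT(\theta)}P_\theta(W(j,b))$, and since $W(j,b)$ depends only on stream $j$, the hypothesis $p^{(j)}(b_s')\le\zeta_s$ gives $E_\theta[M_\theta(b_s')]\le t\zeta_s$. The main obstacle is that this estimate controls the \emph{levels} but not the \emph{differences}, so one cannot simply replace each summand by $t(\zeta_s-\zeta_{s-1})/s$. The remedy is Abel summation: for any sequence with $a_0=0$,
$$\sum_{s=1}^{t_0}\frac{a_s-a_{s-1}}{s}=\frac{a_{t_0}}{t_0}+\sum_{s=1}^{t_0-1}a_s\!\left(\frac{1}{s}-\frac{1}{s+1}\right).$$
Applied with $a_s=E_\theta[M_\theta(b_s')]$, every coefficient of $a_s$ on the right is positive, so substituting $a_s\le t\zeta_s$ and then reversing the identity with $a_s=t\zeta_s$ yields the desired bound $t\sum_{s=1}^{t_0}(\zeta_s-\zeta_{s-1})/s$, completing the plan.
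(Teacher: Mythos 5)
Your proof is correct, and it rests on the same two pillars as the paper's: the identification of $\bigcup_s V_\theta(s,b_s')$ with the event that $M_\theta(b_s')\ge s$ for some $s$, the expectation bound $E_\theta[M_\theta(b_s')]\le t\zeta_s$ of \eqref{Etau.s}, and a summation-by-parts step to convert level bounds into the difference-weighted sum. The organization differs, though: the paper introduces the first-crossing index $\tau=\min\{s: \bm{1}_{V(s,b_s')}=1\}$, derives the partial-sum inequality $\sum_{s\le t_1}s\pi_s\le t\zeta_{t_1}$ from $\tau\bm{1}_{\{\tau\le t_1\}}\le M_\theta(b_{t_1}')$, and then performs the Abel summation on the distribution of $\tau$ (the weights $1/(t_1(t_1+1))$), whereas you prove the pathwise maximal inequality $\bm{1}\{\exists s: M_\theta(b_s')\ge s\}\le\sum_{s}\bigl(M_\theta(b_s')-M_\theta(b_{s-1}')\bigr)/s$ directly from the monotonicity of $s\mapsto M_\theta(b_s')$, then apply Markov and Abel summation at the level of expectations. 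Your route avoids any bookkeeping with the distribution of $\tau$ and makes the role of the hypotheses transparent (monotonicity of $b_s'$ enters exactly once, in the telescoping step; $\zeta_0=0$ enters exactly once, in the Abel identity); the paper's route has the mild advantage of never needing the maximal inequality as a separate displayed step. All the individual steps you give check out, including the claim $V_\theta(s,b_s')=\{M_\theta(b_s')\ge s\}$ (valid since $s\le t_0\le|\mT(\theta)|$) and the observation that the positivity of the Abel coefficients $1/t_0$ and $1/s-1/(s+1)$ is what licenses substituting the upper bounds $t\zeta_s$ for $E_\theta[M_\theta(b_s')]$.
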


\bigskip

\begin{proof}[Proof of Lemma~\ref{lem:DPV}]
Omit $\theta$ from the notation. With $M(b)$ as in \eqref{Mb.def},
\begin{equation}\label{Etau.s}
E(M(b_s'))=\sum_{j\in\mT}P(W(j,b_s'))\le \sum_{j\in\mT} p^{(j)}(b_s')\le t\zeta_s.
\end{equation}
Define the random variable
$$\tau=\begin{cases}
\min\{s\in[t_0]: \bm{1}_{V(s,b_s')}=1\},&\mbox{if $\bm{1}_{V(s,b_s')}=1$ for some $s\in[t_0]$}\\
t+1,&\mbox{otherwise} \\
\end{cases},$$ and let $\pi_s=P(\tau=s)$. Then the left-hand side of \eqref{PV<.DP.lem} is $$P\left(\bigcup_{s=1}^{t_0}\{\tau=s\}\right)=\sum_{s=1}^{t_0}\pi_s$$ by disjointness. For any $t_1\in [t_0]$ we have 
$\sum_{s=1}^{t_1}\tau\bm{1}_{\{\tau=s\}}=\tau\bm{1}_{\{\tau\le t_1\}}\le M(b_{t_1}')$ by definition of $V$ and $W$. Taking expectations and using \eqref{Etau.s} gives $\sum_{s=1}^{t_1}s\pi_s \le t\zeta_{t_1}$. Dividing both sides of this last by $t_1(t_1+1)$ (resp.\ $t_1$) for $t_1=1,\ldots,t_0-1$ (resp.\ $t_0=t_1$) and summing over $t_1$ gives
\begin{equation}\label{sum.tspi}
\sum_{t_1=1}^{t_0-1}\frac{1}{t_1(t_1+1)}\sum_{s=1}^{t_1}s\pi_s + \frac{1}{t_0}\sum_{s=1}^{t_0}s\pi_s \le \sum_{t_1=1}^{t_0-1}\frac{t \zeta_{t_1}}{t_1(t_1+1)}+ \frac{t\zeta_{t_0}}{t_0}.
\end{equation}
The right-hand side of \eqref{sum.tspi} is easily seen to be the right-hand side of \eqref{PV<.DP.lem}, while the left-hand side of \eqref{sum.tspi} simplifies to $\sum_{s=1}^{t_0}\pi_s$ after reversing the order of summation in the first term.
\end{proof}

\subsection{Proofs of Results for Stepdown Procedures in Section~\ref{sec:D}}

\begin{proof}[Proof of Theorem~\ref{thm:DP}] We consider the generic stepdown procedure defined in Section~\ref{sec:gen.D} with step values given by \eqref{DP.gen.cons}. Fix $\theta\in\Theta$ such that $\mT(\theta)$ is nonempty and omit it from the notation. We show that $\gfdp\le\alpha$, the other claim being similar. Our method of proof here, similar to the other theorems, will be to introduce auxiliary random variables (denoted with superscript $^*$) describing the behavior of the specific sequential stepdown procedure under consideration,  and then establish inequalities that these random variables satisfy on the event $F=\{\fdp>\gamma_1\}$.  In other words, the interpretation of all such inequalities should be that they hold with $P(\cdot|F)$-probability~1, where we assume without loss of generality that $P(F)>0$ since otherwise there is nothing to prove.

As the stepdown procedure proceeds according to its definition in Section~\ref{sec:gen.D}, we may think of the FDP as being updated according to its definition~\eqref{FDP.def} after each rejection of a null hypothesis. In particular, the FDP is updated $m_i$ times  during the $i$th stage according to the ordering~\eqref{Hsrej}. Considering sample paths of data streams on which $\fdp>\gamma_1$ (i.e., outcomes in the event $F$), we define the following auxiliary random variables: Let $i^*$ be the earliest stage at which the updated FDP exceeds $\gamma_1$ at some point during the stage, let $m^*$ denote the smallest value of $m$ such that the rejection of $H^{(j(n_{i^*},|\mJ_{i^*}|-m+1))}$ causes the updated FDP to exceed $\gamma_1$, let $r^*=m^*+m_{i^*}$ so that the $r^*$th rejection causes the FDP to first exceed $\gamma_1$, and finally let $s^*$ be the unique integer $s$ satisfying $s-1\le\gamma_1 r^*<s$ so that $\lfloor \gamma_1 r^*\rfloor+1=s^*$.  Also, define these random variables arbitrary (equal to $\infty$, say) off of $F$ so they are defined on the entire sample space. Note that whereas the $r_i$ in the definition of the procedure in Section~\ref{sec:gen.D} are only updated at the end of each stage, the value of $r^*$ may be determined before the end of the corresponding stage and thus will not necessarily equal any of the $r_i$, even for $i=i^*$. Upon the $r^*$th rejection, $s^*$ true null hypotheses have been rejected. In other words, $F\subseteq V(s^*,b_{r^*})$, where 
\begin{equation}\label{DP.randV}
V(s^*,b_{r^*})=\bigcup_{r,s}\left(V(s,b_r)\cap\{r^*=r\}\cap \{s^*=s\}\right).
\end{equation}
and $V(s,b)$ is defined in \eqref{V.DP}. We note that $b_{r^*}$ is merely an auxiliary random variable useful for proving $\gfdp$ control and is not necessarily equal to, say, the stopping boundary appearing in \eqref{cont-samp} for any particular stage of the procedure.

The number~$|\mT|$ of true null hypotheses satisfies
\begin{equation}\label{mT<=}
|\mT|\le J-(r^*-s^*).
\end{equation}
 Also, since $s^*/\gamma_1$ (interpreted as $\infty$ if $\gamma_1=0$) strictly exceeds $r^*$ we have $r^*\le\lceil s^*/\gamma_1\rceil -1$. Denoting $\overline{j}(s,|\mT|,\gamma_1)$ and $D_1(\gamma_1,\{\delta_j\})$ simply by $\overline{j}(s)$ and $D_1$, respectively, these last two bounds imply that 
 \begin{equation}\label{ar<eaS}
r^*\le \overline{j}(s^*).
\end{equation} 

Next we bound the possible values of $s^*$ from above. Because $r^*\le J$ we have $s^*\le\lfloor \gamma_1 J\rfloor+1$, and clearly $s^*\le|\mT|$. Also, by the minimal properties of $i^*$ and $m^*$ it must be that the previous update of the FDP, which took the value $(s^*-1)/(r^*-1)$, did not exceed $\gamma_1$, hence $(s^*-1)/(r^*-1)\le \gamma_1$, or
\begin{equation*}
s^*\le\gamma_1(r^*-1)+1\le \gamma_1(J-|\mT|+s^*-1)+1,
\end{equation*} using \eqref{mT<=}. Simplifying and taking the floor gives
\begin{equation}\label{t.up}
s^*\le \left\lfloor \frac{\gamma_1(J-|\mT|)}{1-\gamma_1}\right\rfloor+1,
\end{equation} and combining these three bounds gives
\begin{equation}\label{s^*<tup}
s^*\le \overline{t}(|\mT|,\gamma_1),
\end{equation}
which we denote simply by $\overline{t}$. 

Let $F_1$ be the event that \eqref{ar<eaS} and \eqref{s^*<tup} hold, and note that $F\subseteq F_1$ by the arguments leading to those inequalities. Then
\begin{multline}\label{PFDP.D1}
\gfdp =P(F)=P(F\cap F_1) \le  P(V(s^*,b_{r^*})\cap F_1)\le \sum_{s'=1}^{\overline{t}} P(V(s',b_{r^*})\cap\{s^*=s'\}\cap F_1)\\
\le \sum_{s'=1}^{\overline{t}} P(V(s',b_{\overline{j}(s^*)})\cap\{s^*=s'\}\cap F_1)\qm{[since \eqref{ar<eaS} holds on $F_1$ and  $V(s',\cdot)$ are non-increasing]}\\
=\sum_{s'=1}^{\overline{t}} P(V(s',b_{\overline{j}(s')})\cap\{s^*=s'\}\cap F_1)\le \sum_{s'=1}^{\overline{t}} P\left(\bigcup_{s=1}^{\overline{t}} V(s,b_{\overline{j}(s)})\cap\{s^*=s'\}\cap F_1\right)\\
\le P\left(\bigcup_{s=1}^{\overline{t}} V(s,b_{\overline{j}(s)}) \cap F_1\right)\le P\left(\bigcup_{s=1}^{\overline{t}} V(s,b_{\overline{j}(s)}) \right).
\end{multline}
The third-to-last inequality in \eqref{PFDP.D1} holds by expanding the event $V(s',b_{\overline{j}(s')})$ into the union event.

Now let $\eps_s=\eps(s,|\mT|,\gamma_1,\{\delta_j\})$ and it is not hard to verify that the $\eps_s$ are non-decreasing in $s$. We have
\begin{equation}\label{DP.PW.upper}
P(W(j,b_{\overline{j}(s)}))\le\alpha_{\overline{j}(s)}=\frac{\alpha \delta_{\overline{j}(s)}}{D_1}=\frac{\alpha \eps_s}{D_1}.
\end{equation} 
Applying Lemma~\ref{lem:DPV} (with $\zeta_s=\alpha \eps_s/D_1$) to the last term in \eqref{PFDP.D1} gives
$$\gfdp\le |\mT| \sum_{s=1}^{\overline{t}} \frac{\alpha_{\overline{j}(s)}-\alpha_{\overline{j}(s-1)}}{s} = \frac{\alpha |\mT|}{D_1}\sum_{s=1}^{\overline{t}} \frac{\eps_s-\eps_{s-1}}{s}=\frac{\alpha S}{D_1}\le\alpha.$$

The proof that $\gfnp\le\beta$ is completely symmetric and there is no conflict or interaction between rejections and acceptances of null hypotheses.

\end{proof}

\bigskip

\begin{proof}[Proof of Theorem~\ref{thm:Dk}]
We consider the generic stepdown procedure defined in Section~\ref{sec:gen.D} with step values given by \eqref{cons.Dk}. We will show that the procedure satisfies $\fweI(\theta) \le\alpha$, the other claim being similar. Fix $\theta\in\Theta$ such that $|\mT(\theta)|\ge k_1$, since otherwise $\fweI(\theta) =0$, and omit $\theta$ from the notation.  As in the proof of Theorem~\ref{thm:DP} we shall introduce auxiliary random variables (denoted with superscript $^*$) describing the behavior of the procedure and then establish inequalities that these random variables satisfy on the event 
$$F=\{\mbox{at least $k_1$ null hypotheses $H^{(j)}$ rejected, $j\in\mathcal{T}$}\}.$$
In other words, the interpretation of all such inequalities should be that they hold with $P(\cdot|F)$-probability~1, where we assume without loss of generality that $P(F)>0$ since otherwise there is nothing to prove.

Considering outcomes on which at least $k_1$ true null hypotheses are rejected (i.e., outcomes in $F$), define the following random variables: Let $i^*$ and $m^*$ be the stage and index, respectively, of the $k_1$th rejected true null hypothesis~$H^{(j(n_{i^*},|\mJ_{i^*}|-m^*+1))}$, and let $r^*=m^*+m_{i^*}$ so that the $r^*$th null hypothesis rejected is the $k_1$th true null hypothesis rejected. As above, define these random variables arbitrarily off of $F$ so they are defined on the entire sample space. Then
\begin{equation}\label{r.bd.Dk}
r^*\le J-|\mT|+k_1
\end{equation} because $r^*$ takes its largest possible value  when  the only remaining hypotheses upon the $r^*$th rejection are the $|\mT|-k_1$ remaining true null hypotheses. As in the proof of Theorem~\ref{thm:DP}, $r^*$ may be determined before the end of stage $i^*$ and should not be confused with the $r_i$ in the definition of the procedure in Section~\ref{sec:gen.D}.

With $M(b)$ as in \eqref{Mb.def}, note that $M(b_{r^*})\ge k_1$ on $F$ since at least $k_1$ standardized test statistics corresponding to true nulls must cross $b_{r^*}$ before crossing the furthest lower boundary~$a_1$, since otherwise a null would be accepted rather than rejected by definition of the stepdown procedure.  Thus $F\subseteq\{M(b_{r^*})\ge k_1\}$ and so
\begin{equation}\label{Dk.Mmono}
M(b_{r^*})\le M(b_{J-|\mT|+k_1})
\end{equation} by \eqref{r.bd.Dk} and the monotonicity of $M(\cdot)$ and the $b_j$. Then, using \eqref{Dk.Mmono} and Markov's inequality,
\begin{multline*}
\fweI= P(F)\le P(M(b_{r^*})\ge k_1)\le P(M(b_{J-|\mT|+k_1})\ge k_1)\le\frac{1}{k_1}E\left( M(b_{J-|\mT|+k_1}) \right)\\
= \frac{1}{k_1}E\left( \sum_{j\in\mT}\bm{1}_{W(j,b_{J-|\mT|+k_1})} \right) =\frac{1}{k_1}\sum_{j\in\mT} P(W(j,b_{J-|\mT|+k_1}))\le \frac{1}{k_1}\sum_{j\in\mT}\alpha_{J-|\mT|+k_1}\\=
\frac{|\mT|\alpha_{J-|\mT|+k_1}}{k_1}=\alpha.
\end{multline*}
\end{proof}

\subsection{Proofs of Results for Stepup Procedures in Section~\ref{sec:U}}

The proofs of both Theorems~\ref{thm:UP} and \ref{thm:Uk} utilize the following lemma.

\begin{lemma}\label{lem:Rs<V.U}
For the generic sequential stepup procedure in Section~\ref{sec:gen.U}, under $\theta\in\Theta$, for any $s\in[J]$ we have
\begin{equation}\label{Rs<V.U}
\{\mbox{exactly $s$ null hypotheses rejected}\;\} \subseteq  V_\theta(t^*,b_s),
\end{equation} the latter defined as
\begin{equation}\label{def.Vt*}
V_\theta(t^*,b_s)=\bigcup_t \left(V_\theta(t,b_s)\cap \{t^*=t\}\right)
\end{equation}
where $V_\theta(t,b)$ is as in \eqref{V.DP} and $t^*$ is the number of true null hypotheses rejected.
\end{lemma}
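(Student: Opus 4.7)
The plan is to produce an explicit family of $t^*$ true nulls witnessing membership in $V_\theta(t^*,b_s)$. I fix an outcome in the event $\{\text{exactly $s$ null hypotheses rejected}\}$, and I take as the candidate witnesses the $t^*$ true nulls that were actually rejected by the stepup procedure along this sample path (by definition, there are exactly $t^*$ of them). It then suffices, for every such $j \in \mT(\theta)$, to verify that the event $W(j,b_s)$ of \eqref{T>bM} occurs, i.e.\ that $\wtilde{\Lambda}^{(j)}(n)\ge b_s$ at some time $n$ while $\wtilde{\Lambda}^{(j)}(n')>a_1$ throughout $n'<n$.

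For the ``upper crossing'' half, let $i$ be the stage at which the rejected true null $H^{(j)}$ is rejected, so $j=j(n_i,\ell_j)$ with $\ell_j\ge |\mJ_i|-m_i+1$. The definition of $m_i$ in \eqref{fdrmjrej} gives $\wtilde\Lambda^{(j(n_i,|\mJ_i|-m_i+1))}(n_i)\ge b_{r_i+m_i}$, and since the statistics are sorted and $j$ has rank at least $|\mJ_i|-m_i+1$, we obtain $\wtilde\Lambda^{(j)}(n_i)\ge b_{r_i+m_i}$. Since $r_i+m_i=r_{i+1}\le s$ (the total number of rejections ever made on this path) and $\{b_w\}$ is nonincreasing in $w$, this already gives $\wtilde\Lambda^{(j)}(n_i)\ge b_s$, so $n=n_i$ will be the required crossing time.

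The main obstacle is the ``no earlier lower crossing'' half: showing $\wtilde\Lambda^{(j)}(n')>a_1$ for all $n'<n_i$. I would break this into two cases according to whether $n'$ lies strictly between stage boundaries or coincides with one. During any stage $i'\le i$, the continuation rule \eqref{fdrcont-samp} keeps every active rank-$\ell$ statistic strictly above $a_{c_{i'}+\ell}\ge a_1$, using the monotonicity of $\{a_w\}$ and $c_{i'}\ge 0$; since $j$ is active throughout stages $1,\dots,i$, this handles all $n'\in(n_{i'-1},n_{i'})$ with $i'\le i$. The delicate points are the decision times $n_{i'}$ for $i'<i$: at such a time the lower boundary may have been crossed by some (necessarily other) active statistic, and I must argue $j$'s statistic remains above $a_1$. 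Here I use that $j$ is not accepted at stage $i'$ (it is still active at stage $i>i'$): if $j=j(n_{i'},\ell_j)$ had $\wtilde\Lambda^{(j)}(n_{i'})\le a_{c_{i'}+\ell_j}$, then the maximizer $m'_{i'}$ in Step~3(b) of the stepup procedure would be at least $\ell_j$, forcing the acceptance of $H^{(j)}$, a contradiction. Hence $\wtilde\Lambda^{(j)}(n_{i'})>a_{c_{i'}+\ell_j}\ge a_1$, completing this half.

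Combining the two halves yields $W(j,b_s)$ for every one of the $t^*$ rejected true nulls, hence the chosen $t^*$-tuple lies in $\bigcap_{\ell=1}^{t^*}W(j_\ell,b_s)$; together with the identification of $t^*$ with the random value in the indexed union \eqref{def.Vt*}, this places the outcome in $V_\theta(t^*,b_s)$ and establishes the inclusion \eqref{Rs<V.U}.
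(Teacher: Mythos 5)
Your proposal is correct and follows essentially the same route as the paper's proof: take the rejected true nulls as the witnesses, obtain the upper crossing at the rejection stage from the ordering together with the definition of $m_i$ in \eqref{fdrmjrej} and the bound $r_i+m_i\le s$ plus monotonicity of the $b_w$, and rule out an earlier crossing of $a_1$ by noting the hypothesis would then have been accepted rather than rejected. Your treatment of the ``no earlier lower crossing'' step is somewhat more detailed than the paper's one-sentence contradiction argument (and note the acceptance rule is Step~2(b), not 3(b), of the stepup procedure), but the substance is identical.
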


\begin{proof}
Let $R(s)$ denote the event on the left-hand side of \eqref{Rs<V.U}.
On outcomes in $R(s)$ define the following random variables: Let $i^*$ be the stage at which the $s$th rejection occurs, let $j^*$ be such that $H^{(j^*)}$ is the $s$th rejected null hypothesis, and recall that $t^*$ is the number of true hypotheses rejected. By definition of $H^{(j^*)}$ and by step~\ref{fdrrej-step} of the procedure we have $j^*=j(n_{i^*},|\mJ_{i^*}|-m_{i^*}+1)$, $s=r_{i^*}+m_{i^*}$, and $\wtilde{\Lambda}^{(j^*)}(n_{i^*})\ge b_{r_{i^*}+m_{i^*}}$. If a true hypothesis~$H^{(j')}$, $j'\in\mT$, is rejected then it is rejected at some stage $i'\le i^*$, and $j'=j(n_{i'},\ell)$ for some $\ell\ge |\mJ_{i'}|-m_{i'}+1$. Note that $r_{i'}+m_{i'}\le s$  because if $i'=i^*$ then $r_{i'}+m_{i'}=r_{i^*}+m_{i^*}=s$, and otherwise $i'\le i^*-1$ so $r_{i'}+m_{i'}=r_{i'+1}\le r_{i^*}\le s$.
Then
$$\wtilde{\Lambda}^{(j')}(n_{i'})= \wtilde{\Lambda}^{(j(n_{i'},\ell))}(n_{i'})\ge \wtilde{\Lambda}^{(j(n_{i'},|\mJ_{i'}|-m_{i'}+1))}(n_{i'})\ge b_{r_{i'}+m_{i'}}\ge b_s,$$ using \eqref{fdrmjrej} for the second-to-last inequality. This holds for any rejected true hypothesis, hence there are distinct $j_1,\ldots,j_{t^*}\in\mT$ such that, for each $\ell\in[t^*]$, $H^{(j_\ell)}$ is rejected and 
\begin{equation}\label{L>bs.UP}
\wtilde{\Lambda}^{(j_\ell)}(n)\ge b_s\qm{for some $n$.} 
\end{equation}
If it were that $\wtilde{\Lambda}_{n'}^{(j_\ell)}\le a_1$ for some $n'$ less than the corresponding $n$ in \eqref{L>bs.UP}, then $H^{(j_\ell)}$ would not have been rejected but rather accepted, contradicting our assumption about $H^{(j_\ell)}$. Combining these statements gives that any outcome in $R(s)$ is in $V(t^*,b_s)$. 
\end{proof}

\bigskip

\begin{proof}[Proof of Theorem~\ref{thm:UP}] We consider the generic stepup procedure defined in Section~\ref{sec:gen.U} with step values given by \eqref{UP.gen.cons}. Fix $\theta\in \Theta$ such that $\mT(\theta)$ is nonempty, and omit $\theta$ from the notation. We will show that $\gfdp\le\alpha$, the other claim being similar.  For $s\in[J]$ let $\gamma(s)=\lfloor\gamma_1 s\rfloor+1$ and let $T(s)$ denote the event that at least $\gamma(s)$ true null hypotheses are rejected, and let $R(s)$ be the event on the left-hand side of \eqref{Rs<V.U}. We claim that 
\begin{equation}\label{T<V.UP}
R(s)\cap T(s)\subseteq \begin{cases}
V(\gamma(s)\vee (s+|\mT|-J),b_s),&\mbox{if $\gamma(s)\le |\mT|$,}\\
\emptyset,&\mbox{otherwise.}
\end{cases}
\end{equation} By Lemma~\ref{lem:Rs<V.U} we have $R(s)\cap T(s)\subseteq V(t^*,b_s)\cap T(s)$, and to finish the proof of \eqref{T<V.UP} we show that, on any outcome in the latter event, defined analogously to \eqref{def.Vt*}, $t^*\ge \gamma(s)\vee (s+|\mT|-J)$ if $s$ is such that $\gamma(s)\le |\mT|$ and then use that $V(\cdot,b)$ is non-increasing; the other case of \eqref{T<V.UP} is trivial by the definition of $R(s)$ and $T(s)$. We recall that this and other inequalities involving the random variable $t^*$ should be interpreted as holding with $P(\cdot|V(t^*,b_s)\cap T(s))$-probability~$1$, this event assumed without loss of generality to have positive probability. We know that $t^*\ge \gamma(s)$ by definition of $T(s)$. On the other hand, $t^*$ is equal to the number~$s$ of null hypotheses rejected minus the number of false null hypotheses rejected, the latter bounded above by $J-|\mT|$, hence $t^*\ge s+|\mT|-J$.

With \eqref{T<V.UP} established we have
\begin{multline}
\gfdp=\bigcup_{1\le s\le J} R(s)\cap T(s) \subseteq \bigcup_{1\le s\le J,\; \gamma(s)\le|\mT|} V(\gamma(s)\vee (s+|\mT|-J),b_{s})\\
= \bigcup_{|\mT|-J+1\le s\le |\mT|,\; \gamma(J+s-|\mT|)\le |\mT|} V(\gamma(J+s-|\mT|)\vee s,b_{J+s-|\mT|}).\label{PF<U.UP}
\end{multline} For $s$ in the range of the union in \eqref{PF<U.UP}, let $\sigma(s)=\gamma(J+s-|\mT|)\vee s$, which is  a non-decreasing sequence of consecutive integers taking the values $1,2,\ldots,s_1$ for some $s_1\le|\mT|$ by virtue of the restrictions in \eqref{PF<U.UP}.  For $s \in [s_1]$ let $\sigma^{-1}(s)=\max\{s': \sigma(s')=s\}$. If $\sigma(s)=\sigma(s+1)$ then using the non-increasing property of $V(s,\cdot)$ we have
\begin{equation}\label{V.tele.UP}
V(\sigma(s),b_{J+s-|\mT|})\cup V(\sigma(s+1),b_{J+s+1-|\mT|})\subseteq V(\sigma(s+1),b_{J+s+1-|\mT|}).
\end{equation}
 By collapsing terms in \eqref{PF<U.UP} according to \eqref{V.tele.UP}, we have that the union in \eqref{PF<U.UP} is contained in
 \begin{equation*}
\bigcup_{s=1}^{s_1} V(s,b_{J+\sigma^{-1}(s)-|\mT|}).
\end{equation*} Denote $S_2(|\mT|,\gamma_1,\{\delta_j\})$ and $D_2(\gamma_1,\{\delta_j\})$ by $S_2$ and $D_2$, respectively.
Applying Lemma~\ref{lem:DPV} to this last with $\zeta_s=\alpha_{J+\sigma^{-1}(s)-|\mT|}$ and $b_s'=b_{J+\sigma^{-1}(s)-|\mT|}$, and recalling that $s_1\le|\mT|$, we have
\begin{multline}
\frac{D_2}{|\mT|\alpha}\cdot P(\fdp >\gamma_1)\le \frac{D_2}{|\mT|\alpha}\cdot P\left(\bigcup_{s=1}^{s_1} V(s,b_{J+\sigma^{-1}(s)-|\mT|})\right)\\
\le \frac{D_2}{|\mT|\alpha}\cdot  |\mT|\left( \alpha_{J+\sigma^{-1}(1)-|\mT|} +\sum_{1<s\le s_1} \frac{\alpha_{J+\sigma^{-1}(s)-|\mT|} - \alpha_{J+\sigma^{-1}(s-1)-|\mT|}}{s} \right)\\
=  \delta_{J+\sigma^{-1}(1)-|\mT|} +\sum_{1<s\le s_1} \frac{\delta_{J+\sigma^{-1}(s)-|\mT|} - \delta_{J+\sigma^{-1}(s-1)-|\mT|}}{s}.\label{sum.d.UP}
\end{multline} We claim that \eqref{sum.d.UP} is equal to
\begin{equation}\label{S2.UP}
S_2/|\mT| = \delta_1+ \sum_{|\mT|-J+1<s\le |\mT|,\; |\mT|\ge \lfloor \gamma_1(J-|\mT|+s)\rfloor+1} \frac{\delta_{J-|\mT|+s}-\delta_{J-|\mT|+s-1}}{s\vee (\lfloor \gamma_1 (J-|\mT|+s)\rfloor +1)},
\end{equation}
which would complete the proof since $S_2\le D_2$. Note that the denominator in \eqref{S2.UP} is $\sigma(s)$.  If $\sigma^{-1}(1)=|\mT|-J+1$ then the first term in both \eqref{sum.d.UP} and \eqref{S2.UP} is $\delta_1$. Otherwise, $\sigma^{-1}(1)=s_2>|\mT|-J+1$, and the first $J+s_2-|\mT|$ summands in \eqref{S2.UP} are
\begin{multline*}
\delta_1+\frac{\delta_2-\delta_1}{\sigma(|\mT|-J+2)}+\ldots +\frac{\delta_{J+s_2-|\mT|}-\delta_{J+s_2-|\mT|-1}}{\sigma(s_2)} \\
= \delta_1+\frac{\delta_2-\delta_1}{1}+\ldots +\frac{\delta_{J+s_2-|\mT|}-\delta_{J+s_2-|\mT|-1}}{1}=\delta_{J+s_2-|\mT|},
\end{multline*} which is the first summand in \eqref{sum.d.UP} in this case.  Proceeding in this way one may verify the claim term by term, completing the proof. 
\end{proof}

\bigskip

\begin{proof}[Proof of Theorem~\ref{thm:Uk}] We consider the generic stepup procedure defined in Section~\ref{sec:gen.U} with step values given by \eqref{cons.Uk}. We will show that the procedure satisfies $\fweI(\theta) \le\alpha$, the other claim being similar. Fix $\theta\in\Theta$ such that $|\mT(\theta)|\ge k_1$, since otherwise $\fweI(\theta) =0$, and omit $\theta$ from the notation. For $s\in[J]$ let $T(s)$ denote the event that at least $k_1$ true null hypotheses are rejected and let $R(s)$ be the event on the left-hand side of \eqref{Rs<V.U}. We claim that 
\begin{equation}\label{T<V.Uk}
R(s)\cap T(s)\subseteq V(k_1\vee (s+|\mT|-J),b_s)\qmq{for all}s\in[J].
\end{equation} By Lemma~\ref{lem:Rs<V.U} we have $R(s)\cap T(s)\subseteq V(t^*,b_s)\cap  T(s)$, the latter defined analogously to \eqref{def.Vt*}, and to finish the proof of \eqref{T<V.Uk} we show that, on any outcome in the latter event, $t^*\ge k_1\vee (s+|\mT|-J)$ and use that $V(\cdot,b)$ is non-increasing. We recall that this and other inequalities involving the random variable $t^*$ should be interpreted as holding with $P(\cdot|V(t^*,b_s)\cap T(s))$-probability $1$, this event assumed without loss of generality to have positive probability. We know that $t^*\ge k_1$ by definition of $T(s)$. On the other hand, $t^*$ is equal to the number~$s$ of null hypotheses rejected minus the number of false null hypotheses rejected, the latter bounded above by $J-|\mT|$, hence $t^*\ge s+|\mT|-J$.

With \eqref{T<V.Uk}  established we have
\begin{align}
\bigcup_{k_1\le s\le J} R(s)\cap T(s)&\subseteq \bigcup_{k_1\le s\le J} V(k_1\vee (s+|\mT|-J),b_s)\label{UT=UV.Uk}\\
 &= \left\{\bigcup_{k_1\le s\le J-|\mT|+k_1} V(k_1,b_s) \right\}\cup  \left\{ \bigcup_{J-|\mT|+k_1< s\le J} V(s+|\mT|-J,b_s)\right\}\nonumber\\
 &\subseteq V(k_1,b_{J-|\mT|+k_1}) \cup  \left\{ \bigcup_{J-|\mT|+k_1< s\le J} V(s+|\mT|-J,b_s)\right\}\label{UV=V.Uk}\\
 &= \bigcup_{J-|\mT|+k_1\le s\le J} V(s+|\mT|-J,b_s) = \bigcup_{k_1\le s\le |\mT|} V(s,b_{|\mT|-J+s}),\label{UV'.Uk}
\end{align} where the inclusion in \eqref{UV=V.Uk} follows from the facts that $b_s\ge b_{J-|\mT|+k_1}$ for $s\le J-|\mT|+k_1$ and the $V(k_1,\cdot)$ are non-increasing. The $\fweI$ is the probability of the event on the left-hand side of \eqref{UT=UV.Uk}, and applying Lemma~\ref{lem:DPV} to the last union in \eqref{UV'.Uk} with $t_0=|\mT|$, $\zeta_0=\ldots=\zeta_{k_1-1}=0$, $\zeta_s=\alpha_{|\mT|-J+s}$ for $k_1\le s\le |\mT|$, $b_1'=\ldots=b_{k_1-1}'=\infty$, and $b_s'=b_{|\mT|-J+s}$ for $k_1\le s\le |\mT|$, we have
\begin{multline*}
\fweI\le |\mT|\left(\frac{\alpha_{|\mT|-J+k_1}}{k_1}+\sum_{k_1<s\le |\mT|} \frac{\alpha_{|\mT|-J+s}-\alpha_{|\mT|-J+s-1}}{s}\right)\\
=\left(\frac{\alpha}{D_3(k_1,\{\delta_j\})}\right)S_3(k_1,|\mT|,\{\delta_j\})\le \alpha.
\end{multline*}
\end{proof}

\subsection{Proof of Theorem~\ref{thm:simple}}
\renewenvironment{proof}{}{\qed}
\begin{proof}
We verify the first parts of \eqref{sum<=1} and \eqref{fdraH=aS}; the other parts are similar. We have
$$0\le\alpha_w+\wtilde{\beta}_w=\alpha_w+ \frac{\beta_1(1-\alpha_w)}{1-\alpha_1} -1+1 =\frac{-(1-\alpha_w)(1-\alpha_1-\beta_1)}{1-\alpha_1} +1\le 1,$$ using $\alpha_1+\beta_1\le 1$ for the last inequality. It is simple algebra to check that $B_w^{(j)}$ in \eqref{fdrAsBs} can be written as $B_W(\alpha_w,\wtilde{\beta}_w)$, and $A_1^{(j)}$ in \eqref{fdrAsBs} can be written as $A_W(\alpha_w,\wtilde{\beta}_w)$ for any $w\in[J]$. Then 
\begin{align*}
p_w^{(j)}&=P_{h^{(j)}}(\Lambda^{(j)}(n)\ge B_w^{(j)}\mbox{ some $n$, }\Lambda^{(j)}(n')>A_1^{(j)}\;\mbox{all $n'<n$})\\
&=P_{h^{(j)}}(\Lambda^{(j)}(n)\ge B_W(\alpha_w,\wtilde{\beta}_w)\;\mbox{some $n$,}\; \Lambda^{(j)}(n')>A_W(\alpha_w,\wtilde{\beta}_w)\;\mbox{all $n'<n$}) \\
&=\alpha_W^{(j)}(\alpha_w,\wtilde{\beta}_w),
\end{align*}
by definition of $\alpha_W^{(j)}$.
\end{proof}


\def\cprime{$'$}

\end{document}